\newtheorem{claim}{}[section]
\newtheorem{theorem}[claim]{Theorem}
\newtheorem{lemma}[claim]{Lemma}
\newtheorem{proposition}[claim]{Proposition}
\newtheorem{corollary}[claim]{Corollary}
\theoremstyle{remark}
\renewenvironment{proof}{\noindent{\it Proof. \hskip0pt}}
                      {$\square$\par\medskip}
\begin{document}
\baselineskip 6.0 truemm
\parindent 1.5 true pc

\newcommand\lan{\langle}
\newcommand\ran{\rangle}
\newcommand\tr{\operatorname{Tr}}
\newcommand\ot{\otimes}
\newcommand\ttt{{\text{\rm t}}}
\newcommand\rank{\ {\text{\rm rank of}}\ }
\newcommand\choi{{\rm C}}
\newcommand\dual{\star}
\newcommand\flip{\star}
\newcommand\cp{{{\mathbb C}{\mathbb P}}}
\newcommand\ccp{{{\mathbb C}{\mathbb C}{\mathbb P}}}
\newcommand\pos{{\mathcal P}}
\newcommand\tcone{T}
\newcommand\mcone{K}
\newcommand\superpos{{{\mathbb S\mathbb P}}}
\newcommand\blockpos{{{\mathcal B\mathcal P}}}
\newcommand\jc{{\text{\rm JC}}}
\newcommand\dec{{\mathbb D}{\mathbb E}{\mathbb C}}
\newcommand\decmat{{\mathcal D}{\mathcal E}{\mathcal C}}
\newcommand\ppt{{\mathcal P}{\mathcal P}{\mathcal T}}
\newcommand\pptmap{{\mathbb P}{\mathbb P}{\mathbb T}}
\newcommand\xxxx{\bigskip\par ================================}
\newcommand\join{\vee}
\newcommand\meet{\wedge}
\newcommand\ad{\operatorname{Ad}}
\newcommand\ldual{\varolessthan}
\newcommand\rdual{\varogreaterthan}
\newcommand{\slmp}{{\mathcal M}^{\text{\rm L}}}
\newcommand{\srmp}{{\mathcal M}^{\text{\rm R}}}
\newcommand{\smp}{{\mathcal M}}
\newcommand{\id}{{\text{\rm id}}}
\newcommand\tsum{\textstyle\sum}
\newcommand\hada{\Theta}
\newcommand\ampl{\mathbb A^{\text{\rm L}}}
\newcommand\ampr{\mathbb A^{\text{\rm R}}}
\newcommand\amp{\mathbb A}
\newcommand\rk{{\text{\rm rank}}\,}
\newcommand\calI{{\mathcal I}}
\newcommand\bfi{{\bf i}}
\newcommand\bfj{{\bf j}}
\newcommand\bfk{{\bf k}}
\newcommand\bfl{{\bf l}}
\newcommand\bfzero{{\bf 0}}
\newcommand\bfone{{\bf 1}}

\title{Compositions and tensor products of linear maps between matrix algebras}

\author{Seung-Hyeok Kye}
\address{Department of Mathematics and Institute of Mathematics, Seoul National University, Seoul 151-742, Korea}
\email{kye at snu.ac.kr}

\keywords{composition, tensor product, mapping cones, duality, Choi matrices,
$k$-positive maps, $k$-superpositive maps, Schmidt numbers, entanglement,
ampliation, factorization, PPT square conjecture}
\subjclass{15A30, 81P15, 46L05, 46L07}
\thanks{partially supported by NRF-2020R1A2C1A01004587, Korea}

\begin{abstract}
In this semi-expository paper, we first explain key notions from current
quantum information theory and criteria for them in a coherent way.
These include separability/entanglement, Schmidt numbers of
bi-partite states and block-positivity, together with various kinds
of positive maps between matrix algebras like entanglement breaking
maps, $k$-superpositive maps, completely positive maps, $k$-positive
maps. We will begin with concrete examples of elementary positive
maps given by $x\mapsto s^*xs$, and use Choi matrices and duality to
explain all the notions mentioned above. We also show that the Choi
matrix can be defined free from coordinates. The above notions of
positive maps give rise to mapping cones, whose dual cones are
characterized in terms of compositions or tensor products of linear maps. Through
the discussion, we exhibit an identity which connects tensor
products and compositions of linear maps between matrix algebras
through the Choi matrices. Using this identity, we show that the
description of the dual cone with tensor products is possible only
when the involving cones are mapping cones, and recover various
known criteria with ampliation for the notions mentioned above. As
another applications of the identity, we construct various mapping
cones arising from ampliation and factorization, and provide several
equivalent statements to PPT (positive partial transpose) square
conjecture in terms of tensor products.
\end{abstract}
\maketitle

\section{Introduction}

Entanglement has been considered as one of the most important notions in current quantum information theory
since its mathematical definition was given by Werner \cite{Werner-1989} for mixed states, which can be expressed
as sums of tensor products of two matrices.
Horodecki's separability criterion \cite{horo-1} tells us that positive linear maps between matrices are essential to
distinguish entanglement from separable states. Among positive maps, several classes like complete positive,
$k$-positive and $k$-superpositive maps play important roles.

The notions of complete positivity and $k$-positivity had emerged in Stinespring's representation theorem \cite{stine} in 1955, and examples
distinguishing different $k$-positivities were found by Choi \cite{choi72} and Tomiyama \cite{tom_85}.
Motivated by Woronowicz' work \cite{woronowicz} utilizing the usual duality between mapping spaces and tensor products,
the author \cite{eom-kye} found the dual objects of $k$-positivity in the tensor products of matrix algebras,
which are just Schmidt numbers in the current terminology. The notion of Schmidt numbers had been introduced by
Terhal and Horodecki \cite{terhal-sghmidt} who showed that these numbers can be determined by $k$-positive maps.
Bi-partite states of Schmidt number one are just separable states, and states which are not separable are called entangled.
The counterparts of separable states in the mapping space through Choi matrices were introduced
with various motivations \cite{{ando-04},{hsrus},{shor}} under the names; superpositive maps, entanglement breaking maps.
This notion has been extended \cite{{cw-EB},{ssz}} to $k$-superpositivity
which is just the counterpart of Schmidt numbers at most $k$.

The first purpose of this note is to explain the above notions and various known criteria for them
in a single framework. We will begin
with $k$-superpositive maps which are nonnegative sums of the maps
$x\mapsto s^*xs$ with matrices $s$ whose ranks are at most $k$.
We use the duality and Choi matrices to define $k$-positivity and Schmidt number $k$, respectively.
In this way, the original definition of $k$-positivity through ampliation
is naturally recovered, and the Choi's correspondence \cite{choi75-10} between
completely positivity of $\phi$ and the positivity of the Choi matrix $\choi_\phi$ is also obtained.
The convex cones $\superpos_k$ and $\mathbb P_k$ consisting of $k$-superpositive and $k$-positive maps, respectively,
are mapping cones as well as convex cones.
The notion of mapping cones was introduced by St\o rmer \cite{{stormer-dual}} to study the extension problem
of positive maps. It is known \cite{{sko-laa},{stormer_scand_2012}} that the dual cone of a mapping cone
can be described in terms of composition and tensor product of linear maps.

It was shown in \cite{gks} that the description of the dual cones through composition is possible
only when the involving cones are  one-sided mapping cones.
We will see that the description of dual cones through tensor products is possible
only when the involving cones are (two-sided) mapping cones.
To see this, we will exhibit the identity
\begin{equation}\label{fund}
(\phi_1\ot\phi_2)(\choi_\sigma)=\choi_{\phi_2\circ\sigma\circ\phi_1^*}
\end{equation}
which connects composition and tensor product of linear maps between matrix algebras through the Choi matrices.
This identity will be used to recover various criteria through ampliation in a single framework.
These include the characterizations of decomposable maps \cite{stormer82} and $k$-positive maps \cite{eom-kye},
together with separability criteria \cite{horo-1}, characterizations of Schmidt numbers \cite{terhal-sghmidt},
entanglement breaking maps \cite{hsrus} and $k$-superpositive maps \cite{cw-EB}.

After we explain Choi matrices and duality with a bilinear pairing, we introduce
basic notions including $k$-superpositivity, complete positivity, $k$-positivity for linear maps,
together with Schmidt numbers and $k$-blockpositivity of tensor product of matrices in the next section.
We also show that Choi matrices can be defined independent of standard matrix units to retain all the correspondences
between linear maps and tensor products. We give in Section \ref{mc_idneity} a simple proof of the identity (\ref{fund}) and
characterize mapping cones in terms of tensor products. This provides us various characterizations
of $k$-positivity in terms of tensor products. In this section, we also introduce the mapping cones of
decomposable maps and PPT maps.
Section \ref{cri_amp} will be devoted to recover various known criteria through ampliation
using the identity (\ref{fund}).

In the remainder of the paper, we exhibit two more applications of the identity (\ref{fund}).
Motivated by recent works \cite{Christandl19} and \cite{dms}, we suggest in Section \ref{constructio_MC} further
constructions of mapping cones through ampliation and factorization. Those mapping cones are dual to each other,
which may be considered as natural extensions of the duality between $k$-positivity and $k$-superpositivity.
The identity (\ref{fund}) will be useful to deal with such mapping cones. We give one more application
of the identity (\ref{fund}) in Section \ref{PPT-sec} to provide
several equivalent claims to the PPT square conjecture in terms of tensor products.
We close the paper with some questions in the final section.

This work is strongly motivated by the joint paper \cite{gks} with Erling St\o rmer and Mark Girard,
and we will follow the notations in \cite{gks}. For examples, $M_A$ denotes the algebra of all complex matrices
acting on the Hilbert space $\mathbb C^A$ with the dimension $a$.
The author is grateful to Erling St\o rmer and Mark Girard for encouragement and comments on the draft, respectively.
This is a revised version of the paper posted under the same title, which is rewritten to be self-contained
for general audiences.

\section{Convex cones in quantum information theory}

Two main tools to explain various notions from quantum information theory
are bilinear pairing to define the dual objects and the Choi matrices connecting linear maps
and tensor products of matrices. For a given linear map $\phi:M_A\to M_B$, the
{\sl Choi matrix} $\choi_\phi\in M_A\ot M_B$
is defined by
$$
\choi_\phi=\sum_{i,j} |i\ran\lan j|\ot \phi(|i\ran\lan j|)\in M_A\ot M_B.
$$
It is clear that $\phi\mapsto \choi_\phi$ is a linear isomorphism from the space $L(M_A,M_B)$
of all linear maps to $M_A\ot M_B$. We denote by $H(M_A,M_B)$ the real space of all Hermiticity preserving maps.

With the bilinear pairing $\lan a,b\ran=\tr(ab^\ttt)$ between matrices,
it is easily seen that the identity $\lan a\ot b,\choi_\phi\ran=\lan b,\phi(a)\ran$
holds for $a\in M_A$, $b\in M_B$ and $\phi\in L(M_A,M_B)$.
This is nothing but the usual bilinear pairing between dual of the tensor products and mapping spaces,
and so, it is natural to define \cite{eom-kye,woronowicz} the bilinear pairing
\begin{equation}\label{bi-map-ten}
\lan a\ot b,\phi\ran:=\lan b,\phi(a)\ran=\lan a\ot b,\choi_\phi\ran
\end{equation}
between real space $(M_A\ot M_B)^{\rm h}$ of Hermitian matrices  and the mapping space $H(M_A,M_B)$.
We also define \cite{sko-laa,gks}
the bilinear pairing between mapping spaces by
\begin{equation}\label{bi-map-map}
\lan\phi,\psi\ran:=\lan \choi_\phi,\choi_\psi\ran=\tr(\choi_\phi\choi_\psi^\ttt)
\end{equation}
for $\phi,\psi\in H(M_A,M_B)$.
Suppose that $X$ and $Y$ are finite dimensional real vector space with a bilinear pairing $\lan\ ,\ \ran$ between them.
For a subset $K\subset X$, the {\sl dual cone}  $K^\circ$ in $Y$ is defined by
the set of all $y\in Y$ satisfying $\lan x,y\ran\ge 0$ for  every $x\in K$.
It is easy to see that $K^{\circ\circ}$ is the smallest closed convex cone containing $K$.

For a linear map $\phi:M_A\to M_B$, the {\sl dual map} $\phi^*:M_B\to M_A$ is defined by
$$
\lan\phi^*(b),a\ran=\lan b,\phi(a)\ran,\qquad a\in M_A,\ b\in M_B.
$$
It is easily seen that $\choi_{\phi^*}\in M_B\ot M_A$ is the flip of $\choi_\phi\in M_A\ot M_B$.
It is also easily seen that the following identities
\begin{equation}\label{pair_dual}
\begin{aligned}
&(\phi\circ\psi)^*=\psi^*\circ\phi^*,\qquad
\lan\phi,\psi\ran=\lan\phi^*,\psi^*\ran,\\
&\lan\psi\circ\phi,\sigma\ran=\lan\phi,\psi^*\circ\sigma\ran=\lan\psi,\sigma\circ\phi^*\ran
\end{aligned}
\end{equation}
hold whenever the compositions and the bilinear pairings are defined.

We begin with the elementary positive map $\ad_s:M_A\to M_B$ defined by
$$
\ad_s (x)=s^*xs,\qquad x\in M_A,
$$
for an $a\times b$ matrix $s\in M_{A,B}$, or equivalently, a linear map $s:\mathbb C^B\to\mathbb C^A$.
It is well known \cite{stormer} that the map $\ad_s$ generates an extreme ray of the convex cone of all positive maps
from $M_A$ to $M_B$. It was also shown \cite{marcin_exp} to generate an exposed ray.
For a given $k=1,2,\dots$, we denote by
$\superpos_k$ the convex cone generated by $\ad_s$ with $\rank s\le
k$, and call a map in $\superpos_k$ {\sl $k$-superpositive} \cite{ssz}.
It is clear that the
sequence $\{\superpos_k:k=1,2,\dots\}$ is increasing, and their union coincides with
the convex cone $\superpos_{\min\{a,b\}}$.

For a matrix $s=\sum_{i,j}s_{ij}|i\ran\lan j|\in M_{A,B}$, we define the vector $\lan\tilde s|$ by
\begin{equation}\label{max}
\lan\tilde s|=\sum_{i,j}s_{ij}\lan i|\lan j|\in\mathbb C^A\ot\mathbb C^B.
\end{equation}
Then it is straightforward to check the identity
\begin{equation}\label{choi_adj}
\choi_{\ad_s}=|\tilde s\ran\lan \tilde s|
\end{equation}
holds. Indeed, we have
$$
\begin{aligned}
\choi_{\ad_s}
&=\textstyle\sum_{p,q}|p\ran\lan q|\ot \left(\textstyle\sum_{k,\ell}\bar s_{\ell k}|k\ran\lan \ell|\right)
 |p\ran\lan q| \left(\sum_{i,j} s_{ij} |i\ran\lan j|\right)\\
&=\left(\textstyle\sum_{k,\ell}\textstyle\sum_{p}\bar s_{\ell k}\lan\ell |p\ran |p\ran |k\ran\right)
 \left(\textstyle\sum_{i,j}s_{ij}\textstyle\sum_q\lan q|i\ran \lan q|\lan j|\right)\\
&=\left(\textstyle\sum_{k,\ell}\bar s_{\ell k}|\ell\ran |k\ran\right)
  \left(\textstyle\sum_{i,j} s_{ij} \lan i|\lan j|\right)
=|\tilde s\ran\lan \tilde s|.
\end{aligned}
$$
We write $\choi_K:=\{\choi_\phi\in M_A\ot M_B:\phi\in K\}$ for $K\subset L(M_A,M_B)$.
Because the rank of $s\in M_{A,B}$ coincides with the Schmidt rank of $\lan\tilde s|\in\mathbb C^A\ot\mathbb C^B$, we see that
the convex cone
$$
{\mathcal S}_k:=\choi_{\superpos_k}\subset M_A\ot M_B
$$
consists of all (unnormalized) states in the convex cone $\pos_{AB}:=(M_A\ot M_B)^+$
which are convex sums of rank one projections
onto vectors whose Schmidt ranks are at most $k$. A state in ${\mathcal S}_k\setminus {\mathcal S}_{k-1}$
is called to have {\sl Schmidt number $k$}. It is clear that ${\mathcal S}_{\min\{a,b\}}=\pos_{AB}$.

For any $|\zeta\ran=\sum_{i=1}^k |x_i\ran|y_i\ran\in\mathbb C^A\ot\mathbb C^B$ whose Schmidt rank is at most $k$, we have
$|\zeta\ran\lan\zeta|=\sum_{i,j=1}^k |x_i\ran\lan x_j|\ot |y_i\ran\lan y_j|$,
and so, we have
$$
\begin{aligned}
\lan |\zeta\ran\lan\zeta|,\phi\ran
&=\sum_{i,j=1}^k\lan |y_i\ran\lan y_j|, \phi(|x_i\ran\lan x_j|)\\
&=\left\lan \sum_{i,j=1}^k |i\ran\lan j|\ot |y_i\ran\lan y_j|, \sum_{i,j=1}^k |i\ran\lan j|\ot \phi(|x_i\ran\lan x_j|)\right\ran.
\end{aligned}
$$
Putting $|\xi\ran=\sum_{i=1}^k|i\ran|x_i\ran\in \mathbb C^k\ot\mathbb C^A$ and
$|\eta\ran=\sum_{i=1}^k|i\ran|y_i\ran\in \mathbb C^k\ot\mathbb C^B$,
we see that
\begin{equation}\label{kkkk}
\lan |\zeta\ran\lan\zeta|,\phi\ran
=\lan |\eta\ran\lan\eta|, (\id_k\ot\phi)(|\xi\ran\lan\xi|)\ran.
\end{equation}
Therefore, we see that
a map $\phi:M_A\to M_B$ belongs to ${\mathcal S}_k^\circ$ with respect to the bilinear pairing (\ref{bi-map-ten}),
or equivalently belongs to $\superpos_k^\circ$ with respect to (\ref{bi-map-map})
if and only if the map $\id_k\ot\phi$ from $M_k\ot M_A$
into $M_k\ot M_B$ is positive. We call such a map $\phi$ {\sl $k$-positive}, and denote by $\mathbb P_k$
the convex cone of all $k$-positive maps.
In short, we have seen that two mapping spaces $\superpos_k$ and $\mathbb P_k$ are dual to each other, that is, we have
$\superpos_k^\circ=\mathbb P_k$ and $\mathbb P_k^\circ=\superpos_k$.
The duality $\mathbb P_1={\mathcal S}_1^\circ$ between mapping spaces and tensor products is already implicit
in \cite{woronowicz}, and $\mathbb P_k={\mathcal S}_k^\circ$ was found in \cite{eom-kye}.
On the other hand, ${\mathcal S}_1=\mathbb P_1^\circ$ was shown in \cite{horo-1} to get a
criterion for separability.

Hermitian matrices in the convex cone
$$
\blockpos_k:=\choi_{\mathbb P_k}\subset M_A\ot M_B
$$
are called {\sl $k$-blockpositive}. Then we see that $\blockpos_k$ and $\superpos_k$ are dual to each other
with respect to (\ref{bi-map-ten}), and it follows that $\varrho\in M_A\ot M_B$ is $k$-blockpositive if and if
$\lan\xi|\varrho|\xi\ran\ge 0$ for every $|\xi\ran\in\mathbb C^A\ot\mathbb C^B$ with Schmidt rank at mots $k$.
Because $\superpos_k$ is increasing when $k$ increases,
${\mathcal S}_k$ and $\blockpos_k$ are increasing and decreasing, respectively. Because
${\mathcal S}_{\min\{a,b\}}=\pos_{AB}$ is self-dual,
we see that $\phi$ is $\min\{a,b\}$-superpositive if and only if it is $\min\{a,b\}$-positive if and only if
$\choi_\phi$ is positive. A linear map $\phi$ is called {\sl completely positive}
if $\id_k\ot\phi$ is positive for every $k=1,2,\dots$,
and we denote by $\cp_{AB}$ the convex cone of all completely positive maps from $M_A$ to $M_B$.
Because $\id_k\ot\ad_s=\ad_{I_k\ot s}$ is positive, we have the following:

\begin{theorem}\label{choi}\cite{choi75-10}
For a linear map $\phi:M_A\to M_B$, the following are equivalent:
\begin{enumerate}
\item[{\rm (i)}]
$\phi$ is completely positive,
\item[{\rm (ii)}]
$\phi$ is $\min\{a,b\}$-positive,
\item[{\rm (iii)}]
$\phi$ is $\min\{a,b\}$-superpositive,
\item[{\rm (iv)}]
$\choi_\phi$ is positive.
\end{enumerate}
\end{theorem}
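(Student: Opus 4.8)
The plan is to observe that the equivalence (ii)$\,\Leftrightarrow\,$(iii)$\,\Leftrightarrow\,$(iv) is essentially already forced by the material developed above, so that the only genuinely new point is to splice complete positivity into the picture. Writing $m:=\min\{a,b\}$, I would establish the implications (iii)$\,\Rightarrow\,$(iv), (iv)$\,\Rightarrow\,$(iii), (iv)$\,\Rightarrow\,$(i), (i)$\,\Rightarrow\,$(ii) and (ii)$\,\Rightarrow\,$(iv), which together close up to give all four conditions equivalent.

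For (iii)$\,\Leftrightarrow\,$(iv): if $\phi$ is $m$-superpositive then $\phi=\sum_i c_i\,\ad_{s_i}$ with $c_i\ge 0$, hence $\choi_\phi=\sum_i c_i\,\choi_{\ad_{s_i}}=\sum_i c_i\,|\tilde s_i\ran\lan\tilde s_i|$ by (\ref{choi_adj}), which is positive. Conversely, if $\choi_\phi$ is positive its spectral decomposition exhibits it as a nonnegative combination of rank one projections $|\zeta\ran\lan\zeta|$, and each $|\zeta\ran\in\mathbb C^A\ot\mathbb C^B$ has Schmidt rank at most $m$ by the Schmidt decomposition; thus $\choi_\phi\in{\mathcal S}_m=\choi_{\superpos_m}$, so $\phi\in\superpos_m$.

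For (iv)$\,\Rightarrow\,$(i): having just observed that positivity of $\choi_\phi$ gives $\phi=\sum_i c_i\,\ad_{s_i}$ with $c_i\ge 0$, for every $k$ we obtain $\id_k\ot\phi=\sum_i c_i\,(\id_k\ot\ad_{s_i})=\sum_i c_i\,\ad_{I_k\ot s_i}$, a nonnegative combination of positive maps, hence positive; so $\phi$ is completely positive. The implication (i)$\,\Rightarrow\,$(ii) is immediate from the definition, taking $k=m$. Finally (ii)$\,\Rightarrow\,$(iv): if $\phi$ is $m$-positive, i.e.\ $\phi\in\mathbb P_m=\superpos_m^\circ$, then $\lan\psi,\phi\ran\ge 0$ for all $\psi\in\superpos_m$, which by (\ref{bi-map-map}) reads $\lan\varrho,\choi_\phi\ran\ge 0$ for every $\varrho\in{\mathcal S}_m=\pos_{AB}$; since $\pos_{AB}$ is self-dual for this pairing, $\choi_\phi\in\pos_{AB}$, i.e.\ $\choi_\phi$ is positive.

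I do not expect a real obstacle here: all the ingredients are in place. The one point deserving care — and it is purely bookkeeping — is the universal quantifier over $k$ in the definition of complete positivity; it is disposed of entirely by the identity $\id_k\ot\ad_s=\ad_{I_k\ot s}$ together with the fact that every $s\in M_{A,B}$ automatically has rank at most $m$, so that $\superpos_m$ is already generated by all the elementary maps $\ad_s$ and no separate limiting argument over $k$ is needed.
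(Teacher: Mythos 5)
Your proof is correct and follows essentially the same route as the paper: the identification ${\mathcal S}_{\min\{a,b\}}=\pos_{AB}$ via the spectral/Schmidt decomposition and the correspondence $\choi_{\ad_s}=|\tilde s\ran\lan\tilde s|$, the self-duality of $\pos_{AB}$ combined with $\mathbb P_k=\superpos_k^\circ$ to handle (ii), and the identity $\id_k\ot\ad_s=\ad_{I_k\ot s}$ to splice in complete positivity. The only difference is organizational: the paper compresses this into the paragraph preceding the theorem, while you lay it out as an explicit cycle of implications.
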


Statement (iii) of Theorem \ref{choi} tells us that every completely positive map is a nonnegative sum
of $\ad_s$'s. This is called a {\sl Kraus decomposition} \cite{kraus}.
We summarize our discussion in the following diagram:

\begin{equation}\label{diagram}
\begin{matrix}
L(M_A,M_B): & \superpos_1  &\subset &\superpos_k &\subset &\cp_{AB} &\subset &\mathbb P_k &\subset &\mathbb P_1\\
\\
\phantom{\jc}\downarrow\jc  &\downarrow  &&\downarrow  &&\downarrow  &&\downarrow &&\downarrow\\
\\
M_A\ot M_B: & {\mathcal S}_1  &\subset &{\mathcal S}_k &\subset &\pos_{AB} &\subset &\blockpos_k &\subset &\blockpos_1\\
\end{matrix}
\end{equation}
Bi-partite states in ${\mathcal S}_1$ are called {\sl separable}, and a state in $\pos_{AB}$ is called
{\sl entangled} when it is not separable.

We note that Choi matrices play central roles through the discussion.
In the remainder of this section, we show that Choi matrices may be defined
independent of the choice of coordinate systems.
The Choi matrix of a linear map between matrix algebras has been considered in 1967 by de Pillis
\cite{dePillis} who showed that a linear map $\phi$ preserves Hermiticity if and only if the
Choi matrix $\choi_\phi$ is self-adjoint. The Choi matrices of positive maps and completely positive maps have been considered
by Jamio\l kowski \cite{jam_72} and Choi \cite{choi75-10}, respectively. The correspondence $\phi\mapsto\choi_\phi$ is now called
the {\sl Jamio\l kowski--Choi isomorphism}. This isomorphism has been extended for various infinite dimensional cases
\cite{{holevo_2011},{holevo_2011_a},{li_du},{Magajna_2021},{stormer_choi_mat}}, and
multi-linear maps between matrix algebras \cite{{han_kye_tri},{han_kye_optimal},{kye_multi_dual}}.
It should be noted that the correspondences in the diagram (\ref{diagram}) may be broken \cite{Paulsen_Shultz} when we
change the standard basis $\{e_{i,j}\}$ by another basis of matrix algebras
in the definition $\choi_\phi=\sum_{i,j}e_{i,j}\ot\phi(e_{i,j})$.


Now, we note that the Choi matrix $\choi_\phi\in M_A\ot M_B$ of a linear map $\phi:M_A\to M_B$ is given by
\begin{equation}\label{choi-ext}
\choi_\phi
=\sum_{i,j}e_{i,j}\ot\phi(e_{i,j})
=\sum_{i,j}(\id_A\ot\phi)(|i\ran \lan j|\ot |i\ran\lan j|)
=(\id_A\ot \phi)(|\omega\ran\lan\omega|),
\end{equation}
with $|\omega\ran=\sum_i |i\ran|i\ran\in \mathbb C^A\ot\mathbb C^A$.
In order to replace $|\omega\ran$ by another vector, we fix
a matrix $s=\sum_{i,j} s_{ij}|i\ran\lan j|\in M_A$ with the maximum rank, and replace $|\omega\ran$ by
$|\tilde s\ran\in \mathbb C^A\ot \mathbb C^A$ given by (\ref{max}).
In other words, we define
$$
\choi_\phi^s=(\id_A\ot \phi)(|\tilde s\ran\lan\tilde s|)\in M_A\ot M_B,
$$
for a linear map $\phi$ from $M_A$ to $M_B$.
Note that $|\tilde s\ran$ has the maximal Schmidt rank.
When $s$ is the identity matrix, this gives rise to the usual Choi matrix.
We note that
$\choi_\phi^s=(\id_A\ot\phi)(\choi_{\ad_s})$ by (\ref{choi_adj}), and so we get the following identity
\begin{equation}\label{choi_var}
\choi_\phi^s=(\id_A\ot \phi)(\choi_{\ad_s})
=(\id_A\ot \phi)\circ(\id_A\circ\ad_s)(|\omega\ran\lan\omega|)
=\choi_{\phi\circ\ad_s}.
\end{equation}
We also define the bilinear pairing on the real vector space $H(M_A,M_B)$ by
$$
\lan\phi,\psi\ran^s:=\lan\choi_\phi^s,\choi_\psi^s\ran,
$$
for $\phi,\psi\in H(M_A,M_B)$, following the current definition of Choi matrices.

The statement (i) of the following Theorem \ref{choi_matrix} tells us that the dual object of the mapping cone $\superpos_k$ is
independent of the choice the matrix $s$.
We also see that the notions of Schmidt number $k$ and $k$-blockpositivity,
as the Choi matrices of $k$-superpositive maps and $k$-positive maps, respectively,
do not depend on the choice of $s\in M_A$.

\begin{theorem}\label{choi_matrix}
For $k=1,2,\dots$,  $\phi\in H(M_A,M_B)$ and $s\in M_A$ with the maximal rank, we have the following:
\begin{enumerate}
\item[(i)]
$\lan\phi,\psi\ran\ge 0$ for every $\psi\in\superpos_k$ if and only if $\lan\phi,\psi\ran^s\ge 0$ for every $\psi\in\superpos_k$,
\item[(ii)]
$\{\choi_\phi:\phi\in\superpos_k\}=\{\choi_\phi^s:\phi\in\superpos_k\}$,
\item[(iii)]
$\{\choi_\phi:\phi\in \superpos_k^\circ\}=\{\choi_\phi^s:\phi\in \superpos_k^\circ\}$.
\end{enumerate}
\end{theorem}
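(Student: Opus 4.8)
The plan is to reduce all three items to a single invariance property of the cone $\superpos_k$ under composition with $\ad_s$. Since $s\in M_A$ has maximal rank it is an invertible $a\times a$ matrix, so from the elementary rule $\ad_r\circ\ad_t=\ad_{tr}$ one sees that $\Phi\mapsto\Phi\circ\ad_s$ is a linear bijection of $L(M_A,M_B)$ onto itself, its inverse being $\Phi\mapsto\Phi\circ\ad_{s^{-1}}$ because $\ad_s\circ\ad_{s^{-1}}=\ad_{s^{-1}}\circ\ad_s=\id_A$. The identity $(\ref{choi_var})$, which reads $\choi_\Phi^s=\choi_{\Phi\circ\ad_s}$, combined with the definitions of the pairings $\lan\ ,\ \ran$ in $(\ref{bi-map-map})$ and $\lan\ ,\ \ran^s$, immediately gives
\begin{equation}\label{reduction-proposal}
\lan\phi,\psi\ran^s=\lan\phi\circ\ad_s,\psi\circ\ad_s\ran,
\qquad
\{\choi_\phi^s:\phi\in C\}=\{\choi_\psi:\psi\in C\circ\ad_s\}
\end{equation}
for all $\phi,\psi\in H(M_A,M_B)$ and every subset $C\subset L(M_A,M_B)$, where $C\circ\ad_s:=\{\Phi\circ\ad_s:\Phi\in C\}$. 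Hence it suffices to prove $\superpos_k\circ\ad_s=\superpos_k$ and $\superpos_k^\circ\circ\ad_s=\superpos_k^\circ$.

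The first of these I would get from the generators: if $\rk r\le k$, then $\ad_r\circ\ad_s=\ad_{sr}$ and $\rk(sr)=\rk r\le k$ because $s$ is invertible, so $\superpos_k\circ\ad_s\subseteq\superpos_k$; applying this to $s^{-1}$ and using $(\Phi\circ\ad_{s^{-1}})\circ\ad_s=\Phi$ yields the reverse inclusion. For the dual cone I would use $\superpos_k^\circ=\mathbb P_k$: for $\phi\in\mathbb P_k$ the map
$$
\id_k\ot(\phi\circ\ad_s)=(\id_k\ot\phi)\circ(\id_k\ot\ad_s)=(\id_k\ot\phi)\circ\ad_{I_k\ot s}
$$
is a composition of positive maps, hence positive, so $\phi\circ\ad_s\in\mathbb P_k$; bijectivity of $\Phi\mapsto\Phi\circ\ad_s$ together with the corresponding inclusion for $s^{-1}$ then upgrades $\mathbb P_k\circ\ad_s\subseteq\mathbb P_k$ to equality. (Alternatively one may transfer the inclusion through the identities $(\ref{pair_dual})$, using that $\psi\mapsto\psi\circ\ad_{s^\ttt}$ bijects $\superpos_k$.)

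Granting these two invariances, (ii) and (iii) fall out of the second identity in $(\ref{reduction-proposal})$ with $C=\superpos_k$ and $C=\superpos_k^\circ$ respectively. For (i), the first identity in $(\ref{reduction-proposal})$ shows that ``$\lan\phi,\psi\ran^s\ge 0$ for every $\psi\in\superpos_k$'' is the same as ``$\lan\phi\circ\ad_s,\psi\circ\ad_s\ran\ge 0$ for every $\psi\in\superpos_k$''; since $\psi\circ\ad_s$ runs over all of $\superpos_k$ as $\psi$ does, this says exactly $\phi\circ\ad_s\in\superpos_k^\circ=\mathbb P_k$, which by the invariance is equivalent to $\phi\in\mathbb P_k$, i.e. to ``$\lan\phi,\psi\ran\ge 0$ for every $\psi\in\superpos_k$''.

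The step I expect to be the real content is the invariance $\superpos_k^\circ\circ\ad_s=\superpos_k^\circ$ of the dual cone — the primal side $\superpos_k\circ\ad_s=\superpos_k$ being mere bookkeeping with the generating rays $\ad_r$. The ampliation identity $\id_k\ot(\phi\circ\ad_s)=(\id_k\ot\phi)\circ\ad_{I_k\ot s}$ (already recorded in the text) is the clean route through it, and the one point not to lose sight of is that it is the invertibility of $s$, supplied precisely by the hypothesis of maximal rank, that promotes the one-sided inclusions to the equalities needed.
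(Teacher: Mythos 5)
Your proof is correct, and the overall reduction is the same as the paper's: both arguments rest on the identity $\choi^s_\phi=\choi_{\phi\circ\ad_s}$ from (\ref{choi_var}) and on the invariance of $\superpos_k$ and of its dual cone under right composition with $\ad_s$, with (ii) handled identically. Where you genuinely diverge is in how the dual-cone invariance is obtained and in the direction of the logical dependence. The paper proves (i) first, by computing $\ad_s^*=\ad_{s^\ttt}$ and using the pairing identities (\ref{pair_dual}) to get $\lan\phi,\psi\ran^s=\lan\phi,\psi\circ\ad_{s^\ttt s}\ran$, so that (i) reduces to $\superpos_k\circ\ad_{s^\ttt s}=\superpos_k$; it then deduces (iii) from (i) by a chain of equivalences applied with $s^{-1}$, never leaving the duality formalism. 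You instead establish $\superpos_k^\circ\circ\ad_s=\superpos_k^\circ$ directly, by identifying $\superpos_k^\circ$ with $\mathbb P_k$ and invoking the ampliation identity $\id_k\ot(\phi\circ\ad_s)=(\id_k\ot\phi)\circ\ad_{I_k\ot s}$ together with closure of positive maps under composition, and then derive both (i) and (iii) from that. Your route is arguably more concrete and makes the role of invertibility transparent, at the cost of importing the characterization $\superpos_k^\circ=\mathbb P_k$; the paper's route is self-contained within the bilinear-pairing machinery and would survive verbatim for any closed convex cone $C$ with $C\circ\ad_t=C$ for nonsingular $t$, without needing an independent description of $C^\circ$. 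Both are complete; the only bookkeeping point worth noting in yours is the order convention $\ad_r\circ\ad_t=\ad_{tr}$, which you state and use consistently.
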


\begin{proof}
We first note that $\ad_s^*=\ad_{s^\ttt}$. In fact, we have
$$
\lan \ad_s^*(y),x\ran
=\lan y,\ad_s(x)\ran
=\lan y, s^*xs\ran
=\lan \bar sys^\ttt,x\ran
=\lan\ad_{s^\ttt}(y),x\ran,
$$
for $x\in M_A$ and $y\in M_B$. It follows that
$$
\lan\phi,\psi\ran^s=\lan\phi\circ\ad_s,\psi\circ\ad_s\ran =\lan \phi, \psi\circ\ad_s\circ\ad_{s^\ttt}\ran=\lan\phi,\psi\circ\ad_{s^\ttt s}\ran,
$$
by (\ref{choi_var}). Therefore, we see that
$\lan\phi,\psi\ran^s\ge 0$ holds for every $\psi\in\superpos_k$ if and only if
$\lan\phi,\psi\circ\ad_{s^\ttt s}\ran \ge 0$ for every $\psi\in\superpos_k$ if and only if
$\lan\phi,\sigma\ran \ge 0$ for every $\sigma\in\superpos_k \circ\ad_{s^\ttt s}$.
Because $s^\ttt s$ is nonsingular, we have $\superpos_k=\superpos_k\circ\{\ad_{s^\ttt s}\}$ for $k=1,2,\dots$,
and this proves (i).
We also have
$$
\{\choi_\phi^s:\phi\in\superpos_k\}=\{\choi_{\phi\circ\ad_s}:\phi\in\superpos_k\}
=\{\choi_\sigma:\sigma\in\superpos_k\circ\ad_s\},
$$
by (\ref{choi_var}). This proves (ii) since $\superpos_k=\superpos_k\circ\ad_s$ by non-singularity of $s$.
For (iii), it suffices to show $\superpos_k^\circ=\superpos_k^\circ\circ\ad_s$, or equivalently,
$\phi\in\superpos_k^\circ$ if and only if $\phi\circ\ad_{s^{-1}}\in\superpos_k^\circ$.
Indeed, we have
$$
\begin{aligned}
\phi\circ\ad_{s^{-1}}\in\superpos_k^\circ\
&\Longleftrightarrow\ \lan\phi\circ\ad_{s^{-1}},\psi\ran\ge 0\ {\text{\rm for every}}\ \psi\in\superpos_k\\
&\Longleftrightarrow\ \lan\phi\circ\ad_{s^{-1}},\psi\circ\ad_{s^{-1}}\ran\ge 0\ {\text{\rm for every}}\ \psi\in\superpos_k\\
&\Longleftrightarrow\ \lan\phi, \psi\ran^{s^{-1}}\ge 0\ {\text{\rm for every}}\ \psi\in\superpos_k\\
&\Longleftrightarrow\ \phi\in\superpos_k^\circ,
\end{aligned}
$$
by (i). This completes the proof.
\end{proof}

After the author had posted the first version of this paper, he considered more general situations
to replace $|\omega\ran\lan\omega|$ in (\ref{choi-ext})
by arbitrary $\Sigma\in M_A\ot M_A$, and defined
$$
\choi^\Sigma_\phi=(\id_A\ot \phi)(\Sigma)\in M_A\ot M_B.
$$
It was shown in \cite{kye-choi_mat} that the Choi's correspondence between completely positivity of $\phi$ and positivity
of $\choi^\Sigma_\phi$ is retained if and only if
$\Sigma$ is a positive rank one matrix whose range vector has the full Schmidt rank,
that is, $\Sigma=|\tilde s\ran\lan\tilde s|$ for a matrix $s$ of full rank.

\section{Mapping cones and tensor products of linear maps}\label{mc_idneity}

A closed convex cone $K$ of positive linear maps in $H(M_A,M_B)$ is called a  {\sl right mapping cone} \cite{gks}
if $K\circ\cp_{AA}\subset K$ holds,
where $K_1\circ K_2$ denotes the set of all $\phi_1\circ\phi_2$ with $\phi_i\in K_i$ for $i=1,2$.
In fact, this is the case if and only if $K\circ\cp_{AA}= K$ holds, since $\id_A\in\cp_{AA}$.
{\sl Left mapping cones} are defined similarly. A closed convex cone
$K$ is also called a {\sl mapping cone} when $\cp_{BB}\circ K\circ\cp_{AA}\subset K$ holds. A convex cone is a mapping cone
if and only if it is both left and right mapping cone.


Suppose that $K$ is a convex cone with $\superpos_1\subset K\subset \mathbb P_1$. Then we see that
$K$ is a right mapping cone if and only if $\lan\phi\circ\sigma,\psi\ran\ge 0$ holds for every $\phi\in K$,
$\sigma\in\cp_{AA}$ and $\psi\in K^\circ$
if and only if $\lan\phi,\psi\circ\sigma^*\ran\ge 0$ holds for every $\phi\in K$, $\sigma\in\cp_{AA}$ and $\psi\in K^\circ$
if and only if $K^\circ$ is a right mapping cone, since $\cp_{AA}$ is self-dual.
Similarly, we also see that $K$ is a right mapping cone if and only if $K^*=\{\phi^*:\phi\in K\}$ is a left mapping cone.
It is clear that $\superpos_k$ is a mapping cone, and so $\mathbb P_k$ is also a mapping cone.

When $K$ is a mapping cone in $H(M_A,M_A)$ with minor additional assumptions, it was shown in \cite{{sko-laa},{stormer_scand_2012}}
that the following statements for a linear
map $\phi$ are equivalent:
\begin{enumerate}
\item[(MC1)]
$\phi\in K^\circ$,
\item[(MC2)]
$\phi\circ\psi$ is completely positive for every $\psi\in K$,
\item[(MC3)]
$\psi\ot\phi$ is positive for every $\psi\in K$,
\item[(MC4)]
$(\psi\ot\phi)(\choi_\id)\ge 0$ for every for every $\psi\in K$.
\end{enumerate}

We may use the identity (\ref{pair_dual}) to see that the equivalence (MC1)
$\Longleftrightarrow$ (MC2) implies that $K$ is a left mapping cone.
In fact, we see that $\phi\in (\cp_{BB}\circ K)^\circ$
if and only if $\lan \phi,\sigma\circ\psi\ran\ge 0$ for every $\psi\in K$ and $\sigma\in\cp_{BB}$
if and only if $\lan \phi\circ\psi^*,\sigma\ran\ge 0$ for every $\psi\in K$ and $\sigma\in\cp_{BB}$
if and only if $\phi\circ\psi^*\in\cp_{BB}$  for every $\psi\in K$, and so we have
$$
(\cp_{BB}\circ K)^\circ =\{\phi\in H(M_A,M_B): \phi\circ\psi^*\in\cp_{BB}\ {\text{\rm for every}}\ \psi\in K\},
$$
for a convex cone $K\subset H(M_A,M_B)$.
The dual cone $(K\circ\cp_{AA})^\circ$ can be handled in the same way, and we have the following:

\begin{theorem}\cite{gks}
Suppose that $K$ is a closed convex cone of positive maps in $H(M_A,M_B)$. Then we have the following:
\begin{enumerate}
\item[(i)]
$K$ is a left mapping cone if and only if
$$
K^\circ =\{\phi\in H(M_A,M_B): \phi\circ\psi^*\in\cp_{BB}\ {\text{\rm for every}}\ \psi\in K\},
$$
\item[(ii)]
$K$ is a right mapping cone if and only if
$$
K^\circ =\{\phi\in H(M_A,M_B): \psi^*\circ\phi \in\cp_{AA}\ {\text{\rm for every}}\ \psi\in K\}.
$$
\end{enumerate}
\end{theorem}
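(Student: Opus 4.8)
The plan is to reduce everything to the dual-cone identity established just above the statement, namely
\[
(\cp_{BB}\circ K)^\circ=\{\phi\in H(M_A,M_B):\phi\circ\psi^*\in\cp_{BB}\ \text{for every}\ \psi\in K\},
\]
which was derived for an arbitrary convex cone $K\subset H(M_A,M_B)$, together with two standard facts: the bipolar identity $K=K^{\circ\circ}$, valid because $K$ is closed and convex, and the self-duality of $\cp_{BB}$ under the pairing (\ref{bi-map-map}), which is used repeatedly in the preceding discussion.

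For the ``only if'' direction of (i), suppose $K$ is a left mapping cone. Then $\cp_{BB}\circ K=K$: the inclusion $\cp_{BB}\circ K\subset K$ is the hypothesis, while $K=\id_B\circ K\subset\cp_{BB}\circ K$ since $\id_B\in\cp_{BB}$. Substituting $\cp_{BB}\circ K=K$ into the displayed formula yields precisely the claimed description of $K^\circ$. For the converse, assume $K^\circ$ is given by that description and fix $\sigma\in\cp_{BB}$ and $\psi\in K$; I must show $\sigma\circ\psi\in K$. For every $\phi\in K^\circ$ the identities (\ref{pair_dual}) give $\lan\sigma\circ\psi,\phi\ran=\lan\sigma,\phi\circ\psi^*\ran$; here $\phi\circ\psi^*\in\cp_{BB}$ by the assumed description of $K^\circ$ and $\sigma\in\cp_{BB}$, so self-duality of $\cp_{BB}$ forces $\lan\sigma\circ\psi,\phi\ran\ge 0$. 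Since $\phi\in K^\circ$ was arbitrary, $\sigma\circ\psi\in K^{\circ\circ}=K$, as desired.

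Part (ii) is handled symmetrically: one repeats the argument with the companion formula $(K\circ\cp_{AA})^\circ=\{\phi:\psi^*\circ\phi\in\cp_{AA}\ \text{for every}\ \psi\in K\}$ (obtained in the same manner, as remarked above) and the pairing identity $\lan\psi\circ\tau,\phi\ran=\lan\tau,\psi^*\circ\phi\ran$ from (\ref{pair_dual}); alternatively, and more economically, one applies (i) to $K^*=\{\phi^*:\phi\in K\}$, using that $K$ is a right mapping cone exactly when $K^*$ is a left mapping cone, that $(K^*)^\circ=(K^\circ)^*$ by the identity $\lan\phi,\psi\ran=\lan\phi^*,\psi^*\ran$, and that $\cp_{AA}$ is invariant under $\phi\mapsto\phi^*$ because $\ad_s^*=\ad_{s^\ttt}$.

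I do not anticipate a serious obstacle: the whole content is the bookkeeping with the dual-map relations (\ref{pair_dual}) together with the self-duality and $*$-invariance of the completely positive cones. The one step that must genuinely be invoked, and is easy to overlook, is the bipolar identity $K=K^{\circ\circ}$ in the two converse implications; it is exactly there that the standing hypothesis that $K$ is a \emph{closed convex} cone of positive maps is used, since without it the stated description of $K^\circ$ would only pin down $K^{\circ\circ}$ and not $K$ itself.
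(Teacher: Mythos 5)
Your proof is correct and follows essentially the same route as the paper: the whole content is the identity $(\cp_{BB}\circ K)^\circ=\{\phi:\phi\circ\psi^*\in\cp_{BB}\ \text{for every}\ \psi\in K\}$ (and its companion for $K\circ\cp_{AA}$), combined with $\cp_{BB}\circ K=K$ for the forward direction and the bipolar identity $K=K^{\circ\circ}$ for the converse, which is exactly what the paper leaves implicit after deriving that identity. Your explicit verification that $\sigma\circ\psi\in K^{\circ\circ}$ via self-duality of $\cp_{BB}$, and your remark on where closedness of $K$ is genuinely used, are accurate fillings-in of the paper's terse argument.
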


In order to investigate the relations between [MC2], [MC3] and [MC4], we need the identity (\ref{fund}) which relates
compositions and tensor products of linear maps.
We suppose that $\phi_i:M_{A_i}\to M_{B_i}$ is a linear map
between matrix algebras $M_{A_i}$ and $M_{B_i}$, for $i=1,2$.
Note that every element in the tensor product $M_{A_1}\ot M_{A_2}$ is expressed as the Choi matrix $\choi_\sigma$ of a linear map
$\sigma:M_{A_1}\to M_{A_2}$, and the composition
$\phi_2\circ\sigma\circ\phi_1^*$ maps $M_{B_1}$ into $M_{B_2}$. In this circumstance, we will show that the identity
(\ref{fund}) holds.  See {\sc Figure 1}.

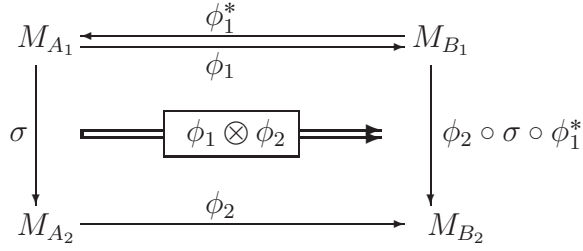
\begin{figure}\label{fig1}
\begin{center}
\setlength{\unitlength}{0.05 truecm}
\begin{picture}(140,75)
\put (0,0){$M_{A_2}$}
\put (17,3){\vector (1,0){86}}
\put (109,0){$M_{B_2}$}
\put (0,50){$M_{A_1}$}
\put (17,50){\vector (1,0){86}}
\put (103,53){\vector (-1,0){86}}
\put (105,50){$M_{B_1}$}
\put (5,45){\vector(0,-1){37}}
\put (-2,25){$\sigma$}
\put (50,56){$\phi_1^*$}
\put (50,43){$\phi_1$}
\put (50,6){$\phi_2$}
\put (110,45){\vector(0,-1){37}}
\put (113,25){$\phi_2\circ\sigma\circ\phi_1^*$}
\put (39,21){\line(1,0){36}}
\put (75,21){\line(0,1){12}}
\put (75,33){\line(-1,0){36}}
\put (39,33){\line(0,-1){12}}
\thicklines
\put (45, 25){$\phi_1\ot\phi_2$}
\put (17,26){\line(1,0){22}}
\put (17.3,25.9){\line(0,1){2}}
\put (75, 26){\vector(1,0){22}}
\put (17,28){\line(1,0){22}}
\put (75, 28){\vector(1,0){22}}
\end{picture}
\end{center}
\caption{The map $\phi_1\ot \phi_2$ sends $\choi_\sigma$ to $\choi_{\phi_2\circ\sigma\circ\phi_1^*}$.}
\end{figure}

To prove the identity (\ref{fund}), we take $b_i\in M_{B_i}$ with $i=1,2$. Then
we have
$$
\begin{aligned}
\lan b_1\ot b_2, \choi_{\phi_2\circ\sigma\circ\phi_1^*}\ran_{B_1B_2}
&=\lan b_2, \phi_2(\sigma(\phi_1^*(b_1)))\ran_{B_2}\\
&=\lan \sigma^*(\phi_2^*(b_2)),\phi_1^*(b_1)\ran_{A_1}\\
&=\tsum_{i,j}\lan\sigma^*(\phi_2^*(b_2)),e_{i,j}^{A_1}\ran_{A_1}\lan\phi_1^*(b_1),e_{i,j}^{A_1}\ran_{A_1}\\
&=\tsum_{i,j}\lan b_2, \phi_2(\sigma(e_{i,j}^{A_1}))\ran_{B_2}\lan b_1, \phi_1(e_{i,j}^{A_1})\ran_{B_1}\\
&=\tsum_{i,j}\lan b_1\ot b_2, \phi_1(e_{i,j}^{A_1})\ot
\phi_2(\sigma(e_{i,j}^{A_1}))\ran_{B_1B_2},
\end{aligned}
$$
where $\{e^A_{i,j}\}$ denotes the matrix units of $M_A$.
Therefore, it follows that
$$
\begin{aligned}
\choi_{\phi_2\circ\sigma\circ\phi_1^*}
&=\tsum_{i,j}\phi_1(e_{i,j}^A)\ot \phi_2(\sigma(e_{i,j}^A))\\
&=\tsum_{i,j}(\phi_1\ot\phi_2)(e_{i,j}^A\ot \sigma(e_{i,j}^A))\\
&=(\phi_1\ot\phi_2)(\choi_\sigma),
\end{aligned}
$$
as it was required.
If we plug $\phi_1=\id_A$, $\phi_2=\phi:M_A\to M_B$ and
$\sigma=\ad_s$ in (\ref{fund}), then we recover the identity (\ref{choi_var}).
In case when $A_1=A_2$, $B_1=B_2$ and
$\sigma:M_A\to M_A$ is the identity map, we have
$$
\choi_{\phi_2\circ\phi_1^*}=(\phi_1\ot\phi_2)(\choi_\id)
$$
which recovers the relation (12) of \cite{gks} and suggests relations between [MC2] and [MC4].

In order to investigate the conditions [MC3] and [MC4], we consider the following diagrams
$$
\begin{CD}
M_A @>\phi>> M_B\\
@VV \tau V \\
M_A @>\psi>> M_B
\end{CD}
\qquad\qquad\qquad
\begin{CD}
M_A @>\phi>> M_B\\
@VV \id_A V \\
M_A @>\psi>> M_B
\end{CD}
\qquad\qquad\qquad
\begin{CD}
M_B @>\phi^*>> M_A\\
@VV \id_B V \\
M_B @>\psi^*>> M_A
\end{CD}
$$
in {\sc Figure 1}. Applying the identity (\ref{fund}), we have the following identities
$$
\begin{aligned}
\lan\sigma^*\circ\psi\circ\tau,\phi\ran
&=\lan\psi\circ\tau,\sigma\circ\phi\ran
=\lan\psi\circ\tau\circ\phi^*,\sigma\ran
=\lan(\phi\ot\psi)(\choi_\tau),\choi_\sigma\ran,\\
\lan\phi,\sigma^*\circ\psi\ran
&=\lan\phi^*, \psi^*\circ\sigma\ran
=\lan\psi\circ\phi^*,\sigma\ran =\lan(\phi\ot\psi)(\choi_{\id_A}),\choi_{\sigma}\ran,\\
\lan\phi,\psi\circ\sigma\ran
&=\lan\psi^*\circ\phi,\sigma\ran =\lan(\phi^*\ot\psi^*)(\choi_{\id_B}),\choi_\sigma\ran.
\end{aligned}
$$
The first line tells us  that $\phi\in (\cp_{BB}\circ K\circ\cp_{AA})^\circ$ if and only if $\phi\ot\psi$ sends positive matrices
to positive matrices for every $\psi\in K$, that is, $\phi\ot\psi$ is a positive map for every $\psi\in K$.
In other words, we have
\begin{equation}\label{ghjhvykyu}
(\cp_{BB}\circ K\circ\cp_{AA})^\circ
=\{\phi\in H(M_A,M_B): \phi\ot\psi\ {\text{\rm is positive for every}}\ \psi\in K\},
\end{equation}
for a convex cone $K\subset H(M_A,M_B)$.
The exactly same argument with the second and third identities may be applied to dual cones
$(\cp_{BB}\circ K)^\circ$ and $(K\circ\cp_{AA})^\circ$,
to get the following:
$$
\begin{aligned}
(\cp_{BB}\circ K)^\circ
&=\{\phi\in H(M_A,M_B): (\phi\ot\psi)(\choi_{\id_A})\in\pos_{BB}\ {\text{\rm for every}}\ \psi\in K\},\\
(K\circ\cp_{AA})^\circ
&=\{\phi\in H(M_A,M_B): (\phi^*\ot\psi^*)(\choi_{\id_B})\in\pos_{AA}\ {\text{\rm  for every}}\ \psi\in K\}.
\end{aligned}
$$
Therefore, we have the following:

\begin{theorem}
For a closed convex cone $K$ of positive maps from $M_A$ into $M_B$, we have the following:
\begin{enumerate}
\item[(i)]
$K$ is a mapping cone if and only if
$$
K^\circ =\{\phi\in H(M_A,M_B): \phi\ot\psi\ {\text{\rm is positive for every}}\ \psi\in K\},
$$
\item[(ii)]
$K$ is a left mapping cone if and only if
$$
K^\circ=\{\phi\in H(M_A,M_B): (\phi\ot\psi)(\choi_{\id_A})\in\pos_{BB}\ {\text{\rm for every}}\ \psi\in K\},
$$
\item[(iii)]
$K$ is a right mapping cone if and only if
$$
K^\circ=\{\phi\in H(M_A,M_B): (\phi^*\ot\psi^*)(\choi_{\id_B})\in\pos_{AA}\ {\text{\rm  for every}}\ \psi\in K\}.
$$
\end{enumerate}
\end{theorem}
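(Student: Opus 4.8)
The plan is to obtain each of the three equivalences directly from the three dual-cone descriptions already derived from the identity (\ref{fund}), combined with the elementary fact recorded earlier that $K^{\circ\circ}$ is the smallest closed convex cone containing $K$, so in particular $K^{\circ\circ}=K$ for a closed convex cone $K$.

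For (i), I would first note that by (\ref{ghjhvykyu}) the set on the right-hand side is precisely $(\cp_{BB}\circ K\circ\cp_{AA})^\circ$, and that $K\subset\cp_{BB}\circ K\circ\cp_{AA}$ always holds because $\id_A\in\cp_{AA}$ and $\id_B\in\cp_{BB}$; hence the mapping-cone condition $\cp_{BB}\circ K\circ\cp_{AA}\subset K$ is equivalent to the equality $\cp_{BB}\circ K\circ\cp_{AA}=K$. If $K$ is a mapping cone this equality gives $K^\circ=(\cp_{BB}\circ K\circ\cp_{AA})^\circ$, which is the claimed formula. For the converse, assuming $K^\circ=(\cp_{BB}\circ K\circ\cp_{AA})^\circ$, I would pass to dual cones to get
$$
\cp_{BB}\circ K\circ\cp_{AA}\subset(\cp_{BB}\circ K\circ\cp_{AA})^{\circ\circ}=K^{\circ\circ}=K,
$$
so $K$ is a mapping cone. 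Parts (ii) and (iii) I would prove by the same two-line argument, replacing $\cp_{BB}\circ K\circ\cp_{AA}$ by $\cp_{BB}\circ K$, respectively $K\circ\cp_{AA}$, and (\ref{ghjhvykyu}) by the companion identity for $(\cp_{BB}\circ K)^\circ$, respectively $(K\circ\cp_{AA})^\circ$, displayed just above; the automatic inclusions $K\subset\cp_{BB}\circ K$ and $K\subset K\circ\cp_{AA}$ again follow from $\id_B\in\cp_{BB}$ and $\id_A\in\cp_{AA}$.

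I do not expect a genuine obstacle here, since all of the analytic content sits in the identity (\ref{fund}) and in the three dual-cone formulas derived from it, and what is left is formal juggling of dual cones. The one point that needs care is the converse directions, where $K^{\circ\circ}=K$ is used: this is exactly where the hypothesis that $K$ is a closed convex cone enters, and one should keep in mind that the composition sets $\cp_{BB}\circ K\circ\cp_{AA}$, $\cp_{BB}\circ K$, $K\circ\cp_{AA}$ need not be convex, so that the inclusion invoked is merely the trivial $L\subset L^{\circ\circ}$ for an arbitrary subset $L$, not an identity.
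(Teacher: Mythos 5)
Your proposal is correct and is essentially the paper's own argument: the paper derives the three dual-cone identities for $(\cp_{BB}\circ K\circ\cp_{AA})^\circ$, $(\cp_{BB}\circ K)^\circ$ and $(K\circ\cp_{AA})^\circ$ from the identity (\ref{fund}) and then states the theorem as an immediate consequence, leaving implicit exactly the bidual juggling you spell out. Your explicit treatment of the converse via $L\subset L^{\circ\circ}$ and $K^{\circ\circ}=K$ for closed convex cones is the right way to fill in that step, and your caution that the composition sets need not be convex is well placed.
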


Therefore, we see that $\phi$ is $k$-positive, that is, $\id_k\ot
\phi$ is positive if and only if $\psi\ot\phi$ is positive for every
$\psi\in\superpos_k$. Since $\id_k$ is a typical example of
$\superpos_k$, one may suspect if $\id_k$ in the definition of
$k$-positivity may be replaced by another $k$-superpositive map.
When we fix a matrix $s$ with rank $k$, it is easy to see that
$\ad_s\ot\phi$ is positive if and only if $\id_k\ot\phi$ is positive
using singular value decomposition of $s$.
This is also due to the fact that
$\superpos_k$ is singly generated as a mapping cone.

\begin{proposition}\label{single}
Let $s\in M_{A,B}$ be of rank $k$. Then we have
$(\cp\circ\{\ad_s\}\circ\cp)^{\circ\circ}=\superpos_k$.
\end{proposition}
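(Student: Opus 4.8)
The plan is to prove the two inclusions separately, using that $\superpos_k$ is a mapping cone together with the elementary identity $\ad_q\circ\ad_s\circ\ad_p=\ad_{psq}$. Here the left-hand $\cp$ in the statement denotes $\cp_{BB}$ and the right-hand one denotes $\cp_{AA}$, so that $\cp\circ\{\ad_s\}\circ\cp$ is the set of all $\beta\circ\ad_s\circ\alpha$ with $\alpha\in\cp_{AA}$ and $\beta\in\cp_{BB}$. For the inclusion $(\cp\circ\{\ad_s\}\circ\cp)^{\circ\circ}\subseteq\superpos_k$, note that $\ad_s\in\superpos_k$ because $\rk s=k$; since $\superpos_k$ is a mapping cone, $\cp_{BB}\circ\superpos_k\circ\cp_{AA}\subseteq\superpos_k$, hence $\cp\circ\{\ad_s\}\circ\cp\subseteq\superpos_k$. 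As $K^{\circ\circ}$ is the smallest closed convex cone containing $K$ and $\superpos_k$ is closed and convex, the inclusion follows.

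For the reverse inclusion it suffices, since $\superpos_k$ is by definition the convex cone generated by the maps $\ad_t$ with $t\in M_{A,B}$ and $\rk t\le k$, to show that each such $\ad_t$ already lies in $\cp\circ\{\ad_s\}\circ\cp$. For $p\in M_A$ and $q\in M_B$ a direct computation gives $(\ad_q\circ\ad_s\circ\ad_p)(x)=q^*s^*p^*xpsq=(psq)^*x(psq)$, so $\ad_q\circ\ad_s\circ\ad_p=\ad_{psq}$, with $\ad_p\in\cp_{AA}$ and $\ad_q\in\cp_{BB}$ since every map of the form $\ad_r$ is completely positive. Hence it remains only to produce, for a given $t$ with $r:=\rk t\le k=\rk s$, matrices $p\in M_A$ and $q\in M_B$ such that $psq=t$.

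This last point is elementary linear algebra via singular value decompositions: writing $s=\sum_{i=1}^k\sigma_i|u_i'\ran\lan v_i'|$ and $t=\sum_{i=1}^r\tau_i|u_i\ran\lan v_i|$ with $\sigma_i,\tau_i>0$ and orthonormal systems $\{u_i'\},\{v_i'\},\{u_i\},\{v_i\}$, one checks that $q:=\sum_{i=1}^k|v_i'\ran\lan v_i|\in M_B$ and $p:=\sum_{i=1}^r(\tau_i/\sigma_i)|u_i\ran\lan u_i'|\in M_A$ satisfy $psq=t$. This completes the reverse inclusion, and hence the proof.

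I do not expect any genuine obstacle here; the only point requiring care is bookkeeping. The matrices $p,q$ need not be invertible — they are required only to be completely positive, i.e.\ to be of the form $\ad_{(\cdot)}$ — which is exactly what allows rank-deficient (and rectangular) $t$ to be treated uniformly, and one must keep the domains straight, reading $t\in M_{A,B}$ as a map $\mathbb C^B\to\mathbb C^A$ with $q\colon\mathbb C^B\to\mathbb C^B$, $s\colon\mathbb C^B\to\mathbb C^A$, $p\colon\mathbb C^A\to\mathbb C^A$. One could also organize the argument by first noting that the closed convex hull of $\cp_{BB}\circ K\circ\cp_{AA}$ is the smallest mapping cone containing $K$, so that the assertion becomes: the smallest mapping cone containing $\ad_s$ equals $\superpos_k$.
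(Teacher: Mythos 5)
Your proof is correct and follows essentially the same route as the paper's: the forward inclusion via the mapping-cone property of $\superpos_k$ (and biduality), and the reverse inclusion by using singular value decompositions to factor an arbitrary $t$ with $\rk t\le k$ as $t=psq$, so that $\ad_t=\ad_q\circ\ad_s\circ\ad_p\in\cp_{BB}\circ\{\ad_s\}\circ\cp_{AA}$. The only cosmetic point is that your $q=\sum_{i=1}^{k}|v_i'\ran\lan v_i|$ refers to $v_i$ for $i>r$, which the SVD of $t$ does not supply; either truncate that sum at $r$ or extend $\{v_i\}_{i=1}^{r}$ to an orthonormal system of length $k$ --- both choices give $psq=t$.
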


\begin{proof}
We first note $\cp\circ\{\ad_s\}\circ\cp\subset \superpos_k$, which
implies
$(\cp\circ\{\ad_s\}\circ\cp)^{\circ\circ}\subseteq\superpos_k$. For
the reverse inclusion, it suffices to show that $\ad_a\in
\cp\circ\{\ad_s\}\circ\cp$ whenever $\rank a\le k$, because every
map in $\superpos_k$ is the sum of such maps. Write $\rank a=\ell\le
k$. By singular value decomposition, we can take $v_1:\mathbb
C^k\to\mathbb C^B$ and $v_2:\mathbb C^k\to\mathbb C^A$ such that
\begin{equation}\label{svd}
s=v_2d_1v_1^*,\quad v_1^*v_1=v_2^*v_2=\id_{\mathbb C^k},
\end{equation}
where $d_1$ is a $k\times k$ diagonal matrix with positive real
diagonal entries. We also take $v_3:\mathbb C^\ell\to\mathbb C^B$
and $v_4:\mathbb C^\ell\to\mathbb C^A$ such that
$$
a=v_4d_2v_3^*,\quad v_3^*v_3=v_4^*v_4=\id_{\mathbb C^\ell},
$$
where $d_2$ is an $\ell\times\ell$ diagonal matrix with positive
real diagonal entries. Because $\ell\le k$, we may take $w:\mathbb
C^\ell\to\mathbb C^k$ so that $d_2=w^*d_1w$. Then we have
$$
a=v_4d_2v_3^*=v_4w^*d_1wv_3^*=(v_4w^*v_2^*)s(v_1wv_3^*).
$$
Therefore, we see that $\ad_a =
\ad_{v_1wv_3^*}\circ\ad_s\circ\ad_{v_4w^*v_2^*}$ belongs to
$\cp\circ\{\ad_s\}\circ\cp$.
\end{proof}

Now, we fix $s\in M_{A,B}$ of rank $k$.
By Proposition \ref{single}, we see that
$\phi\in\superpos_k^\circ$ if and only if
$\phi\in(\cp\circ\{\ad_s\}\circ\cp)^\circ$. We apply the first identity of
(\ref{ghjhvykyu}) to the convex cone generated by $\{\ad_s\}$, to
conclude that $\phi\in(\cp\circ\{\ad_s\}\circ\cp)^\circ$ holds if and only if $\phi\ot \ad_s$ is positive,
or equivalently, $\ad_s\ot\, \phi$ is positive.

One more important example of a positive map is the transpose map $\ttt$. If $K$ is a mapping cone then $\{\phi\circ\ttt:\phi\in K\}$
is also a mapping cone. Especially,
$$
\ccp:=\cp\circ\ttt
$$
is a mapping cone whose element is called {\sl completely copositive}.
If $K_1$ and $K_2$ are mapping cones then their convex hull $K_1\join K_2$ and intersection $K_1\meet K_2$ are also mapping cones.
In this way, we have mapping cones
$$
\dec:=\cp\join\ccp,\qquad
\pptmap:=\cp\meet\ccp.
$$
The {\sl partial transpose} in $M_A\ot M_B$ is defined by $(a\ot b)^\Gamma=a^\ttt\ot b$.
Because $\choi_{\phi\circ\ttt}=(\choi_\phi)^\Gamma$,
we see that $\phi\in\pptmap$ if and only if both $\choi_\phi$ and $\choi_\phi^\Gamma$ are positive. A state $\varrho$
belongs to the convex cone
$$
\ppt:=\choi_{\pptmap}=\choi_\cp\meet \choi_\ccp
$$
if and only if $\varrho^\Gamma$ is positive, and such states are called {\sl of positive partial transpose (PPT)}.
We also denote
$$
\decmat:=\choi_{\dec}.
$$
It is clear that $\dec$ and $\pptmap$ are dual to each other,
and $\dec\subset\mathbb P_1$. By duality, we also have ${\mathcal S}_1\subset\ppt$, that is,
every separable state is of PPT \cite{{choi-ppt},{peres}}.
Maps in $\dec$ and $\pptmap$ are called  {\sl decomposable maps} and {\sl PPT maps}, respectively.
Now, we summarize as follows:
\begin{equation}\label{chain_ppt_dec}
\begin{matrix}
L(M_A,M_B): & \superpos_1  &\subset &\pptmap &\subset &\cp_{AB} &\subset &\dec &\subset &\mathbb P_1\\
\\
\phantom{\jc}\downarrow\jc  &\downarrow  &&\downarrow  &&\downarrow  &&\downarrow &&\downarrow\\
\\
M_A\ot M_B: & {\mathcal S}_1  &\subset &\ppt &\subset &\pos_{AB} &\subset &\decmat &\subset &\blockpos_1\\
\end{matrix}
\end{equation}

Comparing two chains of mapping cones in the diagrams
(\ref{diagram}) and (\ref{chain_ppt_dec}), it is natural to ask if there are any inclusion relations between
$\dec$ and $\mathbb P_k$, or equivalently those between $\ppt$ and ${\mathcal S}_k$.
In case of $(a,b)=(2,2)$, it is known \cite{stormer} that
$$
\mathbb P_1[M_2,M_2]=\dec[M_2,M_2]
$$
with the obvious notations,
that is, every positive map between $M_2$ is decomposable.
See also \cite{{book_au_Sz},{stormer_book}} for another proofs.
It was also shown in \cite{woronowicz} that
$$
\mathbb P_1[M_2,M_3]=\dec[M_2,M_3],\qquad
\mathbb P_1[M_3,M_2]=\dec[M_3,M_2]
$$
together with $\mathbb P_1[M_2,M_4]\supsetneqq \dec[M_2,M_4]$.
The first example of an indecomposable positive map
was found in \cite{choi75-10} when $(a,b)=(3,3)$. In this case,  it had been known
that $\mathbb P_2\subset\dec$ holds
for a special class of positive maps \cite{cho-kye-lee},
and the dual claim $\ppt\subset{\mathcal S}_2$ in $M_3\ot M_3$ was also conjectured in \cite{sbl}.
See a survey article \cite{kye_rev}.
The relation
$$
\mathbb P_2[M_3,M_3]\subset\dec[M_3,M_3]
$$
was shown in \cite{ylt} to be true.
It is now known \cite{bhat-osaka-2020} that a $2$-positive map need not to be decomposable in general,
or equivalently, the Schmidt number of a PPT state may exceed two. In fact,
it is known \cite{{criello},{chen_yany_tang-2017},{hllm}}
that the Schmidt number of a PPT state may be arbitrary large. The relation
\begin{equation}\label{dec-con}
\mathbb P_{n-1}[M_n,M_n]\subset \dec[M_n,M_n]
\end{equation}
is conjectured in \cite{Christandl19}, in connection with the PPT square conjecture which will be explained
in Section \ref{PPT-sec}.

\section{Criteria using ampliation}\label{cri_amp}

St\o rmer \cite{stormer82} showed that a map $\phi:M_A\to M_B$ is decomposable if and only if
$\id_k\ot\phi$ sends every PPT matrix to a positive matrix for every $k=1,2,\dots$.
Motivated by this results, the author \cite{eom-kye} showed that
$\phi$ is $k$-positive if and only if $\phi\ot\id_B$ sends ${\mathcal S}_k$ to positive matrices.
Ampliation is also useful to characterize properties of states in $M_A\ot M_B$.
It was shown by Horodecki's \cite{horo-1} that a state $\varrho\in \pos_{AB}$ is separable
if and only if $(\id_A\ot \phi)(\varrho)\in\pos_{AA}$ for every positive map $\phi:M_B\to M_A$.
It is also shown by Terhal and Horodecki \cite{terhal-sghmidt} that
$\varrho\in \pos_{AA}$ has Schmidt number at most $k$ if and only if
$(\id_A\ot \phi)(\varrho)\in\pos_{AA}$ for every $k$-positive map $\phi:M_A\to M_A$.
Furthermore, it was shown in \cite{hsrus} that $\choi_\phi$ is separable if and only if
$\id_A\ot\phi$ sends every states to a separable states.
In this section, we use the identity (\ref{fund}) in a systematic way to recover the above criteria using ampliation.

We first investigate the role of the ampliation maps $\id_A\ot\phi$ and $\phi\ot \id_B$.
To do this, we consider the following diagrams
$$
\begin{CD}
M_A @>\id_A>> M_A\\
@VV \sigma^* V \\
M_A @>\phi>> M_B
\end{CD}
\qquad\qquad\qquad
\begin{CD}
M_A @>\phi>> M_B\\
@VV \psi V \\
M_B @>\id_B>> M_B
\end{CD}
$$
in {\sc Figure 1}, to get the identities
$$
\begin{aligned}
\lan\phi,\psi\circ\sigma\ran
&=\lan \phi\circ\sigma^*,\psi\ran
=\lan \choi_{\phi\circ\sigma^*},\choi_\psi\ran
=\lan(\id_A\ot\phi)(\choi_{\sigma^*}),\choi_\psi\ran,\\
\lan\phi,\sigma\circ\psi\ran
&=\lan\psi\circ\phi^*,\sigma^*\ran
=\lan\choi_{\psi\circ\phi^*},\choi_{\sigma^*}\ran
=\lan(\phi\ot\id_B)(\choi_{\psi}),\choi_{\sigma^*}\ran.
\end{aligned}
$$
From the first line, we see that $\phi\in(K\circ\cp_{AA})^\circ$ if and only if the ampliation $\id_A\ot\phi$ sends $\pos_{AA}$
to $\choi_{K^\circ}$. We also see that $\phi\in(\cp_{BB}\circ K)^\circ$ if and only if $\phi\ot\id_B$ sends
$\choi_{K}$ to $\pos_{BB}$
from the second line. In other words, we have
\begin{equation}\label{amp_buyvyjj}
\begin{aligned}
(K\circ\cp_{AA})^\circ
&= \{\phi\in H(M_A,M_B): (\id_A\ot\phi)(\varrho)\in \choi_{K^\circ}\ {\text{\rm for every}}\ \varrho\in\pos_{AA}\},\\
(\cp_{BB}\circ K)^\circ
&= \{\phi\in H(M_A,M_B): (\phi\ot\id_B)(\varrho)\in\pos_{BB}\ {\text{\rm for every}}\ \varrho\in \choi_{K}\}.
\end{aligned}
\end{equation}
Therefore, we have the following:

\begin{theorem}\label{RMP1}\cite{gks}
For a closed convex cone $K\subset H(M_A,M_B)$, the following are equivalent:
\begin{enumerate}
\item[{\rm (i)}]
$K$ is a right mapping cone,
\item[{\rm (ii)}]
$\phi\in K^\circ$ if and only if $\id_A\ot\phi$ sends positive matrices into $\choi_{K^\circ}$.
\end{enumerate}
The following are also equivalent:
\begin{enumerate}
\item[{\rm (iii)}]
$K$ is a left mapping cone,
\item[{\rm (iv)}]
$\phi\in K^\circ$ if and only if $\phi\ot\id_B$ sends $\choi_{K}$ to positive matrices.
\end{enumerate}
\end{theorem}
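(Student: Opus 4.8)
The plan is to derive the theorem directly from the two displayed identities in \eqref{amp_buyvyjj}, which were established just before the statement via the identity \eqref{fund}. Those identities already say
$$
(K\circ\cp_{AA})^\circ= \{\phi: (\id_A\ot\phi)(\varrho)\in \choi_{K^\circ}\ \text{for every}\ \varrho\in\pos_{AA}\},
\qquad
(\cp_{BB}\circ K)^\circ= \{\phi: (\phi\ot\id_B)(\varrho)\in\pos_{BB}\ \text{for every}\ \varrho\in \choi_{K}\},
$$
so the only work remaining is to translate ``$K$ is a right (resp.\ left) mapping cone'' into the statement that $K^\circ$ agrees with the right-hand side when $K$ itself is plugged in. Recall that by definition $K$ is a right mapping cone iff $K\circ\cp_{AA}=K$ (using $\id_A\in\cp_{AA}$ for one inclusion), and similarly $K$ is a left mapping cone iff $\cp_{BB}\circ K=K$.

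For (i)$\Rightarrow$(ii): if $K$ is a right mapping cone, then $K\circ\cp_{AA}=K$, so $(K\circ\cp_{AA})^\circ=K^\circ$, and the first identity of \eqref{amp_buyvyjj} immediately gives the ``only if'' and ``if'' directions of (ii) simultaneously, since it is an equality of sets. For (ii)$\Rightarrow$(i): assume (ii) holds. Then $K^\circ$ equals the set on the right-hand side of the first identity in \eqref{amp_buyvyjj}, which is precisely $(K\circ\cp_{AA})^\circ$. Hence $K^\circ=(K\circ\cp_{AA})^\circ$. Taking dual cones and using $K^{\circ\circ}=K$ (here $K$ is a closed convex cone) together with $(K\circ\cp_{AA})^{\circ\circ}=\overline{K\circ\cp_{AA}}$, we get $K=\overline{K\circ\cp_{AA}}$; since $K\circ\cp_{AA}\supseteq K$ always and the reverse inclusion now follows, $K$ is a right mapping cone. (One should check that $K\circ\cp_{AA}$ is already closed, or simply note that $K\subseteq K\circ\cp_{AA}\subseteq \overline{K\circ\cp_{AA}}=K$ forces $K\circ\cp_{AA}=K$.) The argument for (iii)$\Leftrightarrow$(iv) is verbatim the same, using the second identity of \eqref{amp_buyvyjj} and the fact that $\cp_{BB}\circ K=K$ characterizes left mapping cones.

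The only genuinely delicate point — and the one I would flag as the main obstacle — is the bidual step: concluding $K^\circ=(K\circ\cp_{AA})^\circ\Rightarrow K=K\circ\cp_{AA}$ requires knowing that $K\circ\cp_{AA}$ is a closed convex cone, or at least that its closure equals $K$ forces equality. Since $K\supseteq$ implies nothing is lost, the containment chain $K\subseteq K\circ\cp_{AA}\subseteq\overline{K\circ\cp_{AA}}=K^{\circ\circ}\cap\cdots$ does the job once one observes $K\circ\cp_{AA}\subseteq \overline{K\circ\cp_{AA}}=(K\circ\cp_{AA})^{\circ\circ}=K^{\circ\circ}=K$. So the sandwich $K\subseteq K\circ\cp_{AA}\subseteq K$ closes everything with no further topology needed. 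Beyond that, the proof is purely formal bookkeeping with dual cones, and no new computation is required since \eqref{fund} and \eqref{amp_buyvyjj} already carry all the analytic content.
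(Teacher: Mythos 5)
Your proposal is correct and follows essentially the same route as the paper, which derives the theorem directly from the two identities in (\ref{amp_buyvyjj}) together with the observation that $K$ is a right (resp.\ left) mapping cone exactly when $K\circ\cp_{AA}=K$ (resp.\ $\cp_{BB}\circ K=K$). Your explicit handling of the bidual step — using $K\subseteq K\circ\cp_{AA}\subseteq (K\circ\cp_{AA})^{\circ\circ}=K^{\circ\circ}=K$ — correctly fills in the one detail the paper leaves implicit.
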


The statement (ii) of Theorem \ref{RMP1} with $K^\circ=\superpos_1$ shows the following result
which recovers the definition of entanglement breaking maps:

\begin{corollary}\cite{hsrus}
A linear map $\phi:M_A\to M_B$ is $1$-superpositive if and only if $\id_A\ot\phi$ send every state in $M_A\ot M_B$ to a separable state.
\end{corollary}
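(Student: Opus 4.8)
The plan is to read this corollary off directly from part (ii) of Theorem~\ref{RMP1}, applied to the cone $K=\mathbb P_1\subset H(M_A,M_B)$. Recall from the diagram (\ref{diagram}) and the discussion preceding it that $\superpos_1$ is a mapping cone, hence so is its dual $\mathbb P_1$; in particular $\mathbb P_1$ is a right mapping cone, so the equivalence (i)$\Longleftrightarrow$(ii) of Theorem~\ref{RMP1} is available with $K=\mathbb P_1$. That equivalence says that $\phi\in\mathbb P_1^\circ$ holds if and only if $\id_A\ot\phi$ sends every positive matrix of $M_A\ot M_A$ into $\choi_{\mathbb P_1^\circ}\subset M_A\ot M_B$.

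The second step is to translate both sides of this equivalence into the language of the corollary. On the left, the duality $\mathbb P_1^\circ=\superpos_1$ established in Section~2 shows that $\phi\in\mathbb P_1^\circ$ says exactly that $\phi$ is $1$-superpositive. On the right, $\choi_{\mathbb P_1^\circ}=\choi_{\superpos_1}={\mathcal S}_1$, and ${\mathcal S}_1$ was defined to be precisely the convex cone of (unnormalized) separable states. So the condition becomes: $\id_A\ot\phi$ carries the positive matrices of $M_A\ot M_A$ into the separable states of $M_A\ot M_B$.

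Finally, a homogeneity remark closes the argument: every positive matrix is a nonnegative scalar multiple of a state, and the cones $\pos_{AA}$ and ${\mathcal S}_1$ are stable under nonnegative scaling, so "$\id_A\ot\phi$ sends every positive matrix into ${\mathcal S}_1$" is equivalent to "$\id_A\ot\phi$ sends every state to a separable state". Chaining the three steps yields the corollary.

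I do not expect a genuine obstacle here; the only things that require care are bookkeeping points. One must apply Theorem~\ref{RMP1} with the roles arranged so that $K^\circ=\superpos_1$ (equivalently $K=\mathbb P_1$), not with $K=\superpos_1$, and one must be explicit that the relevant ampliation is $\id_A\ot\phi$ on $M_A\ot M_A$ (the maximal-rank case already captured by the Choi matrix), whose image lands in $M_A\ot M_B$. Beyond that, the proof is just the conjunction of the already-established identities $\mathbb P_1^\circ=\superpos_1$ and ${\mathcal S}_1=\choi_{\superpos_1}$ together with the identification of ${\mathcal S}_1$ with the separable states.
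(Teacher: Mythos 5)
Your proof is correct and is essentially the paper's own argument: the paper derives this corollary precisely by applying Theorem~\ref{RMP1}(ii) with $K^\circ=\superpos_1$ (i.e.\ $K=\mathbb P_1$) and identifying $\choi_{\superpos_1}={\mathcal S}_1$ with the separable states. Your extra bookkeeping (checking $\mathbb P_1$ is a right mapping cone, the homogeneity remark, and the observation that the domain of $\id_A\ot\phi$ is really $M_A\ot M_A$) only makes explicit what the paper leaves implicit.
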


On the other hands, the statement (iv) with $K^\circ=\dec$ and $K^\circ=\mathbb P_k$ gives rise to the following:

\begin{corollary}\label{cor-st_ky} \cite{{stormer82},{eom-kye}}
For $\phi:M_A\to M_B$, we have the following:
\begin{enumerate}
\item[{\rm (i)}]
$\phi$ is decomposable if and only if $\phi\ot\id_B$ sends PPT states to positive matrices,
\item[{\rm (ii)}]
$\phi$ is $k$-positive if and only if $\phi\ot\id_B$ sends states with Schmidt numbers at most $k$ to positive matrices.
\end{enumerate}
\end{corollary}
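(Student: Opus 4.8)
The plan is to obtain both assertions directly from Theorem~\ref{RMP1}, namely from the equivalence (iii)$\Leftrightarrow$(iv) governing left mapping cones, by choosing the cone $K$ in that theorem so that $K^\circ$ becomes $\dec$ in part~(i) and $\mathbb P_k$ in part~(ii). The substantive content has already been packaged into the second identity of~(\ref{amp_buyvyjj}) via the fundamental identity~(\ref{fund}); what remains is bookkeeping with dualities and Choi matrices.

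For part~(i) I would set $K=\pptmap$. Since $\pptmap=\cp\meet\ccp$ is an intersection of mapping cones, it is itself a mapping cone, hence in particular a left mapping cone, so Theorem~\ref{RMP1} applies. By the duality between $\dec$ and $\pptmap$ recorded after~(\ref{chain_ppt_dec}) we have $K^\circ=\pptmap^\circ=\dec$, and by definition $\choi_K=\choi_{\pptmap}=\ppt$ is exactly the cone of PPT states. Thus statement~(iv) of Theorem~\ref{RMP1} reads: $\phi$ is decomposable if and only if $\phi\ot\id_B$ carries every PPT state to a positive matrix, which is~(i).

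For part~(ii) I would set $K=\superpos_k$, which is a mapping cone, hence a left mapping cone, by the remark in Section~2. Then $K^\circ=\superpos_k^\circ=\mathbb P_k$, while $\choi_K=\choi_{\superpos_k}={\mathcal S}_k$ consists of the unnormalized states of Schmidt number at most $k$. Theorem~\ref{RMP1}(iv) now gives: $\phi$ is $k$-positive if and only if $\phi\ot\id_B$ maps every state in ${\mathcal S}_k$ to a positive matrix, which is~(ii).

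I do not anticipate any genuine obstacle here; the only points needing explicit (if brief) justification are that $\pptmap$ and $\superpos_k$ are indeed left mapping cones and that the images of these cones under $\phi\mapsto\choi_\phi$ are precisely the PPT states and the Schmidt-number-at-most-$k$ states, respectively, all immediate from the definitions and diagrams of Section~2. If a fully self-contained argument were wanted, one could instead unwind the second identity of~(\ref{amp_buyvyjj}) by hand, applying~(\ref{fund}) with $\phi_1=\phi$, $\phi_2=\id_B$ and $\sigma$ ranging over the relevant Choi preimages; but invoking Theorem~\ref{RMP1} is the cleaner route.
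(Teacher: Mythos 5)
Your proposal is correct and is essentially the paper's own argument: the paper likewise obtains both parts by applying statement (iv) of Theorem~\ref{RMP1} with $K^\circ=\dec$ (i.e.\ $K=\pptmap$, $\choi_K=\ppt$) and $K^\circ=\mathbb P_k$ (i.e.\ $K=\superpos_k$, $\choi_K={\mathcal S}_k$). The auxiliary facts you flag --- that $\pptmap$ and $\superpos_k$ are (left) mapping cones and that their Choi images are the PPT and Schmidt-number-at-most-$k$ states --- are exactly the ingredients already recorded in Sections~2 and~3, so nothing is missing.
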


In order to know the image of $\choi_\phi$ under ampliation maps,
we replace $\sigma$ by $\phi$ in {\sc Figure 1}
to consider the following diagrams:
$$
\begin{CD}
M_A @>\id_A>> M_A\\
@VV \phi V \\
M_B @>\psi^*>> M_A
\end{CD}
\qquad\qquad\qquad
\begin{CD}
M_A @>\psi>> M_B\\
@VV \phi V \\
M_B @>\id_B>> M_B
\end{CD}
$$
together with the identities
\begin{equation}\label{amp_ex_cp}
\begin{aligned}
\lan\phi,\psi\circ\sigma\ran
&=\lan\psi^*\circ\phi,\sigma\ran=\lan(\id_A\ot\psi^*)(\choi_\phi),\choi_\sigma\ran,\\
\lan\phi,\sigma\circ\psi\ran
&=\lan\phi\circ\psi^*,\sigma\ran=\lan(\psi\ot\id_B)(\choi_\phi),\choi_\sigma\ran.
\end{aligned}
\end{equation}
Therefore, we have
$$
\begin{aligned}
(K\circ\cp_{AA})^\circ
&=\{\phi\in H(M_A,M_B): (\id_A\ot\psi^*)(\choi_\phi)\in\pos_{AA} \ {\text{\rm for every}}\ \psi\in K\},\\
(\cp_{BB}\circ K)^\circ
&= \{\phi\in H(M_A,M_B: (\psi\ot\id_B)(\choi_\phi)\in\pos_{BB}\ {\text{\rm for every}}\ \psi\in K\}.
\end{aligned}
$$

\begin{theorem}\label{RMP2} \cite{gks}
For a closed convex cone $K\subset H(M_A,M_B)$, the following are equivalent:
\begin{enumerate}
\item[{\rm (i)}]
$K$ is a right mapping cone,
\item[{\rm (ii)}]
$\phi\in K^\circ$ if  and only if $(\id_A\ot\psi^*)(\choi_\phi)\ge 0$ for every $\psi\in K$.
\end{enumerate}
Furthermore, the following are also equivalent:
\begin{enumerate}
\item[{\rm (iii)}]
$K$ is a left mapping cone,
\item[{\rm (iv)}]
$\phi\in K^\circ$ if  and only if $(\psi\ot\id_B)(\choi_\phi)\ge 0$
for every $\psi\in K$.
\end{enumerate}
\end{theorem}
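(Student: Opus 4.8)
The plan is to treat the theorem as a formal consequence of the two descriptions of $(K\circ\cp_{AA})^\circ$ and $(\cp_{BB}\circ K)^\circ$ obtained just above from the identity (\ref{fund}) through (\ref{amp_ex_cp}), combined with the bipolar theorem $K^{\circ\circ}=K$ for closed convex cones. First I would read off that statement (ii) is literally the set equality $K^\circ=(K\circ\cp_{AA})^\circ$: by the first of those two descriptions the set of all $\phi$ with $(\id_A\ot\psi^*)(\choi_\phi)\ge 0$ for every $\psi\in K$ is precisely $(K\circ\cp_{AA})^\circ$, so the biconditional in (ii) asserts exactly that this cone coincides with $K^\circ$. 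In the same way, (iv) is the set equality $K^\circ=(\cp_{BB}\circ K)^\circ$.

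For the implication (i) $\Rightarrow$ (ii), I would argue that if $K$ is a right mapping cone then $K\circ\cp_{AA}\subseteq K$, while the reverse inclusion $K\subseteq K\circ\cp_{AA}$ is automatic because $\id_A\in\cp_{AA}$; hence $K\circ\cp_{AA}=K$ and therefore $K^\circ=(K\circ\cp_{AA})^\circ$, which is (ii). For (ii) $\Rightarrow$ (i), I would apply the polar operation to the equality $K^\circ=(K\circ\cp_{AA})^\circ$: the left side becomes $K^{\circ\circ}=K$ since $K$ is closed and convex, and the right side becomes $(K\circ\cp_{AA})^{\circ\circ}$, which contains $K\circ\cp_{AA}$ (being the smallest closed convex cone containing it); thus $K\circ\cp_{AA}\subseteq K$, i.e.\ $K$ is a right mapping cone. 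The equivalence (iii) $\Leftrightarrow$ (iv) is then proved by the identical argument, with $K\circ\cp_{AA}$ replaced throughout by $\cp_{BB}\circ K$, using $\id_B\in\cp_{BB}$ and the second of the two descriptions above.

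I do not expect a genuine obstacle here, since the analytic content is already carried out in the derivation of those two cone descriptions; the remaining step is bookkeeping with polars. The only points that merit care are that (ii) and (iv) must be used in their full biconditional form, so that they encode equalities of cones rather than mere inclusions — which is exactly what licenses taking polars and invoking $K^{\circ\circ}=K$ — and that the standing assumption that $K$ consists of positive maps is what makes the phrase ``mapping cone'' in (i) and (iii) meaningful, even though the polar computation itself never uses positivity.
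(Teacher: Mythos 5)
Your proposal is correct and follows essentially the same route as the paper: the displayed descriptions of $(K\circ\cp_{AA})^\circ$ and $(\cp_{BB}\circ K)^\circ$ obtained from (\ref{amp_ex_cp}) identify the sets in (ii) and (iv) as these dual cones, and the equivalences then reduce to $K\circ\cp_{AA}=K$ (resp.\ $\cp_{BB}\circ K=K$) together with the bipolar identity $K^{\circ\circ}=K$, exactly as you argue. No gaps.
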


We take the convex cone $K=\superpos_k^\circ$ to get the following characterization of Schmidt numbers of states.
This gives rise to the separability criterion when $k=1$.

\begin{corollary}\label{sch-cri} \cite{{horo-1},{terhal-sghmidt}}
A state $\varrho\in M_A\ot M_B$ belongs to ${\mathcal S}_k$ if and only if
$(\id_A\ot\psi)(\varrho)\ge 0$ for every $k$-positive map $\psi: M_B\to M_A$ if and only if
$(\psi\ot\id_B)(\varrho)\ge 0$ for every $k$-positive map $\psi:M_A\to M_B$.
\end{corollary}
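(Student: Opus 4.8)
The plan is to read off Corollary \ref{sch-cri} directly from Theorem \ref{RMP2} by choosing the right mapping cone. First I would take $K=\superpos_k^\circ=\mathbb P_k$, which is a mapping cone (hence in particular both a left and a right mapping cone) since $\superpos_k$ is a mapping cone and duals of mapping cones are mapping cones, as noted in Section \ref{mc_idneity}. Then $K^\circ=\superpos_k^{\circ\circ}=\superpos_k$, so $\phi\in K^\circ$ means exactly $\phi\in\superpos_k$, i.e.\ (via the Jamio\l kowski--Choi isomorphism) $\choi_\phi\in{\mathcal S}_k$. Applying part (ii) of Theorem \ref{RMP2} to this $K$ gives: $\choi_\phi\in{\mathcal S}_k$ if and only if $(\id_A\ot\psi^*)(\choi_\phi)\ge 0$ for every $\psi\in\mathbb P_k$, and applying part (iv) gives: $\choi_\phi\in{\mathcal S}_k$ if and only if $(\psi\ot\id_B)(\choi_\phi)\ge 0$ for every $\psi\in\mathbb P_k$.

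Next I would translate the statement from maps back to states. Every state $\varrho\in M_A\ot M_B$ is $\choi_\phi$ for a unique $\phi\in H(M_A,M_B)$, so the two conditions above become exactly the two conditions in the Corollary, once I note two cosmetic points. One: in the first condition the quantifier runs over $\psi^*$ with $\psi\in\mathbb P_k$; but $\psi\mapsto\psi^*$ is a bijection of $\mathbb P_k[M_B,M_A]$ onto $\mathbb P_k[M_A,M_B]$ (dualizing preserves $k$-positivity, since $\choi_{\psi^*}$ is the flip of $\choi_\psi$ and the flip preserves Schmidt rank, hence preserves $\blockpos_k$), so "for every $\psi^*$ with $\psi\in\mathbb P_k$" is the same as "for every $k$-positive map $M_B\to M_A$". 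Two: the domains/codomains must be matched so that the composites in Figure 1 make sense — in part (ii) one needs $\psi:M_A\to M_B$ so that $\psi^*:M_B\to M_A$ and $(\id_A\ot\psi^*)(\choi_\phi)\in M_A\ot M_A$; this is consistent with the Corollary's phrasing "$(\id_A\ot\psi)(\varrho)\ge 0$ for every $k$-positive $\psi:M_B\to M_A$" after relabelling $\psi^*\rightsquigarrow\psi$.

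The case $k=1$ is then immediate: $\mathbb P_1$ is the cone of all positive maps and ${\mathcal S}_1$ is the cone of separable states, so the Corollary specializes to the Horodecki separability criterion, while general $k$ gives the Terhal--Horodecki characterization of Schmidt number. Alternatively, and perhaps more transparently, I could bypass Theorem \ref{RMP2} and argue directly from the displayed identities \eqref{amp_ex_cp}: with $K=\mathbb P_k$ one has $(K\circ\cp_{AA})^\circ=\mathbb P_k^\circ\circ\cp_{AA})^\circ$; since $\mathbb P_k$ is a right mapping cone, $\mathbb P_k\circ\cp_{AA}=\mathbb P_k$, so $(\mathbb P_k\circ\cp_{AA})^\circ=\mathbb P_k^\circ=\superpos_k$, and the first identity of \eqref{amp_ex_cp} shows $\choi_\phi\in{\mathcal S}_k=\choi_{\superpos_k}$ iff $(\id_A\ot\psi^*)(\choi_\phi)\ge 0$ for all $\psi\in\mathbb P_k$. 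The second identity of \eqref{amp_ex_cp} with $\mathbb P_k$ a left mapping cone handles the $\psi\ot\id_B$ statement symmetrically.

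I do not expect a genuine obstacle here — the content is entirely carried by the earlier structural theorems. The only point requiring a little care is the bookkeeping of domains and codomains together with the use of $\psi\leftrightarrow\psi^*$ to pass between $\mathbb P_k[M_A,M_B]$ and $\mathbb P_k[M_B,M_A]$, so that the two equivalences in the Corollary really are the two halves of Theorem \ref{RMP2} (or of \eqref{amp_ex_cp}) rather than something slightly different; I would state the $\psi\leftrightarrow\psi^*$ bijection explicitly to make the quantifier manipulation airtight.
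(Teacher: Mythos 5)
Your proof is correct and follows exactly the paper's route: the paper likewise obtains Corollary \ref{sch-cri} by taking $K=\superpos_k^\circ=\mathbb P_k$ in Theorem \ref{RMP2}, so that $K^\circ=\superpos_k$ and $\choi_{K^\circ}={\mathcal S}_k$. The extra care you take with the bijection $\psi\leftrightarrow\psi^*$ between $\mathbb P_k[M_A,M_B]$ and $\mathbb P_k[M_B,M_A]$ is exactly the bookkeeping the paper leaves implicit.
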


In order to know what happens to the image of $\choi_{\id_A}$ under the ampliation $\id_A\ot\phi$,
we put two identity maps in {\sc Figure 1} as follows:
$$
\begin{CD}
M_A @>\id_A>> M_A\\
@VV \id_A V \\
M_A @>\phi>> M_B
\end{CD}
$$
Then we get the identity  $(\id_A\ot\phi)(\choi_{\id_A})=\choi_\phi$, and have the following:

\begin{proposition}
A linear map $\phi:M_A\to M_B$ belongs to a convex cone $K$ if and only if $(\id_A\ot\phi)(\choi_{\id_A})$ belongs to
$\choi_K$.
\end{proposition}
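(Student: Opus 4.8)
The plan is to reduce everything to the single identity $(\id_A\ot\phi)(\choi_{\id_A})=\choi_\phi$ displayed just above the statement, and then to quote the fact, recorded right after the definition of the Choi matrix, that $\psi\mapsto\choi_\psi$ is a linear \emph{iso}morphism of $L(M_A,M_B)$ onto $M_A\ot M_B$.

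First I would record the identity. It is the special case of (\ref{fund}) with $\phi_1=\id_A$, $\phi_2=\phi:M_A\to M_B$ and $\sigma=\id_A$: since $\id_A^*=\id_A$ and $\phi\circ\id_A\circ\id_A=\phi$, the right-hand side $\choi_{\phi_2\circ\sigma\circ\phi_1^*}$ of (\ref{fund}) collapses to $\choi_\phi$, while the left-hand side is $(\id_A\ot\phi)(\choi_{\id_A})$. (Equivalently, one may read it off from (\ref{choi-ext}): taking $\phi=\id_A$ there gives $\choi_{\id_A}=|\omega\ran\lan\omega|$, while $\choi_\phi=(\id_A\ot\phi)(|\omega\ran\lan\omega|)$ for general $\phi$.)

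Next I would invoke injectivity of $\psi\mapsto\choi_\psi$. By the very definition $\choi_K=\{\choi_\psi:\psi\in K\}$, injectivity yields the equivalence $\phi\in K\iff\choi_\phi\in\choi_K$ for any subset $K\subset L(M_A,M_B)$ — no convexity is actually needed here. Substituting $\choi_\phi=(\id_A\ot\phi)(\choi_{\id_A})$ from the previous step turns this into $\phi\in K\iff(\id_A\ot\phi)(\choi_{\id_A})\in\choi_K$, which is the assertion. There is no real obstacle: the whole content sits inside the identity (\ref{fund}), and once the specialization $\phi_1=\sigma=\id_A$, $\phi_2=\phi$ is written down the proposition is immediate. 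The only step deserving a second's attention is the implication $\choi_\phi\in\choi_K\Rightarrow\phi\in K$, which uses that $\choi$ is injective rather than merely linear; the reverse implication is trivial.
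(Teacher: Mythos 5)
Your argument is correct and is essentially the paper's own: the paper likewise specializes the identity (\ref{fund}) to $\phi_1=\sigma=\id_A$, $\phi_2=\phi$ to obtain $(\id_A\ot\phi)(\choi_{\id_A})=\choi_\phi$, from which the proposition is immediate. Your extra remark that the backward implication uses injectivity of $\psi\mapsto\choi_\psi$ (and that convexity of $K$ plays no role) is a correct and slightly more explicit rendering of what the paper leaves implicit.
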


It is worthwhile to collect results for $K^\circ=\superpos_k$ to get equivalent conditions
for $k$-superpositivity of a map $\phi:M_A\to M_B$, or Schmidt number of the corresponding state $\choi_\phi\in M_A\ot M_B$
as follows:

\begin{corollary}
For $\phi\in H(M_A,M_B)$, the following are equivalent:
\begin{enumerate}
\item[{\rm (i)}]
$\phi$ is $k$-superpositive, that is, $\phi=\sum_i\ad_{s_i}$ with $\rk s_i\le k$,
\item[{\rm (ii)}]
$\choi_\phi$ belongs to ${\mathcal S}_k$, that is,  has the Schmidt numbers $\le k$,
\item[{\rm (iii)}]
$\psi^*\circ\phi$ is completely positive for every $k$-positive map $\psi:M_A\to M_B$,
\item[{\rm (iv)}]
$\phi\circ\psi^*$ is completely positive for every $k$-positive map $\psi:M_A\to M_B$,
\item[{\rm (v)}]
$\id_A\ot\phi$ sends every state to a state with Schmidt number $\le k$,
\item[{\rm (vi)}]
$\phi\ot\id_B$ sends every $k$-blockpositive matrix to a positive matrix,
\item[{\rm (vii)}]
$(\id_A\ot\psi)(\choi_\phi)\ge 0$ for every $k$-positive map $\psi:M_B\to M_A$,
\item[{\rm (viii)}]
$(\psi\ot\id_B)(\choi_\phi)\ge 0$ for every $k$-positive map $\psi:M_A\to M_B$,
\item[{\rm (ix)}]
$(\phi\ot\psi)(\choi_{\id_A})\ge 0$ for every $k$-positive map $\psi:M_A\to M_B$,
\item[{\rm (x)}]
$(\phi^*\ot\psi^*)(\choi_{\id_B})\ge 0$ for every $k$-positive map $\psi:M_A\to M_B$,
\item[{\rm (xi)}]
$\phi\ot\psi$ is positive for every $k$-positive map $\psi:M_A\to M_B$.
\item[{\rm (xii)}]
$(\id_A\ot\phi)(\choi_{\id_A})$ is a state with Schmidt number $\le k$.
\end{enumerate}
\end{corollary}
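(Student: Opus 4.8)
The plan is to establish the twelve equivalences by running everything through the single pivot (i) $\Leftrightarrow$ (ii), and then hanging each of the remaining statements off the general cone-theoretic machinery already assembled in Sections~\ref{mc_idneity} and~\ref{cri_amp}, all applied to the mapping cone $K^\circ=\superpos_k$ (equivalently $K=\mathbb P_k$). The equivalence of (i) and (ii) is the definition of Schmidt number together with the identity (\ref{choi_adj}): $\choi_{\ad_s}=|\tilde s\ran\lan\tilde s|$ and $\rk s=$ Schmidt rank of $|\tilde s\ran$, so the Kraus-type sum $\phi=\sum_i\ad_{s_i}$ with $\rk s_i\le k$ corresponds exactly to $\choi_\phi$ being a sum of rank-one projections onto vectors of Schmidt rank $\le k$, i.e.\ $\choi_\phi\in{\mathcal S}_k$. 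This is the backbone; everything else is a reformulation of membership in $\superpos_k$.

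First I would record that $\superpos_k$ and $\mathbb P_k$ are mapping cones (stated in the excerpt), so all three parts of Theorem on mapping cones, Theorem~\ref{RMP1}, and Theorem~\ref{RMP2} apply with $K=\mathbb P_k$, noting $\mathbb P_k^\circ=\superpos_k$, $\mathbb P_k^*=\mathbb P_k$ (since $\psi$ is $k$-positive iff $\psi^*$ is), and $\choi_{\mathbb P_k}=\blockpos_k$, $\choi_{\superpos_k}={\mathcal S}_k$. Then the derivations go:
(iii), (iv) come from Theorem on $(\cp_{BB}\circ K)^\circ$ and $(K\circ\cp_{AA})^\circ$ (Theorem~\ref{gks} just before Theorem~3.?): $\phi\in\mathbb P_k^\circ$ iff $\phi\circ\psi^*\in\cp$ for every $\psi\in\mathbb P_k$, and iff $\psi^*\circ\phi\in\cp$ for every $\psi\in\mathbb P_k$ — here one uses that $\mathbb P_k$ is two-sided so both descriptions are valid, and replaces $\psi$ by $\psi^*$ freely.
(v), (vi) come from Theorem~\ref{RMP1}(ii) and (iv) with $K=\mathbb P_k$: $\phi\in\superpos_k$ iff $\id_A\ot\phi$ sends states into $\choi_{\superpos_k}={\mathcal S}_k$, and iff $\phi\ot\id_B$ sends $\choi_{\mathbb P_k}=\blockpos_k$ (the $k$-blockpositive matrices) to positive matrices.
(vii), (viii) come from Theorem~\ref{RMP2}(ii) and (iv): $\phi\in\mathbb P_k^\circ$ iff $(\id_A\ot\psi^*)(\choi_\phi)\ge0$ for all $\psi\in\mathbb P_k$, and iff $(\psi\ot\id_B)(\choi_\phi)\ge0$ for all such $\psi$; again absorb the adjoint into the quantifier over $k$-positive maps.
(ix), (x), (xi) come from the three identities displayed just before equation (\ref{ghjhvykyu}) and the theorem right after it: $\phi\in(\cp\circ\mathbb P_k\circ\cp)^\circ=\mathbb P_k^\circ$ iff $\phi\ot\psi$ is positive for every $\psi\in\mathbb P_k$ (that is (xi)); and iff $(\phi\ot\psi)(\choi_{\id_A})\ge0$ for all such $\psi$, resp.\ $(\phi^*\ot\psi^*)(\choi_{\id_B})\ge0$, from the left/right mapping cone parts (giving (ix), (x)).
Finally (xii) is immediate from the Proposition stating $(\id_A\ot\phi)(\choi_{\id_A})=\choi_\phi$ together with (ii): $(\id_A\ot\phi)(\choi_{\id_A})$ has Schmidt number $\le k$ iff $\choi_\phi\in{\mathcal S}_k$.

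The only genuine subtlety — the part I expect to be the main obstacle, or at least the part requiring care rather than invocation — is bookkeeping the adjoints and the direction of composition so that the quantifier ``for every $k$-positive map $\psi$'' comes out literally as written in each clause. Because $\mathbb P_k$ is closed under $\psi\mapsto\psi^*$ and under $\psi\mapsto\psi\circ\ttt$ is irrelevant here, one can always trade $\psi^*$ for $\psi$ inside the quantifier; but one must check in (iii) versus (iv), and in (vii) versus (viii), and (ix) versus (x), that the two-sided (rather than merely one-sided) mapping cone property of $\mathbb P_k$ is what licenses having both forms simultaneously. Apart from that, I would simply present the equivalences as a short chain: (i)$\Leftrightarrow$(ii) by Choi matrices, (ii)$\Leftrightarrow$(v)$\Leftrightarrow$(vii)$\Leftrightarrow$(viii)$\Leftrightarrow$(xii) on the tensor side via Theorems~\ref{RMP1} and~\ref{RMP2}, (ii)$\Leftrightarrow$(iii)$\Leftrightarrow$(iv)$\Leftrightarrow$(vi)$\Leftrightarrow$(ix)$\Leftrightarrow$(x)$\Leftrightarrow$(xi) on the mapping side via the dual-cone formulas, each step being a one-line citation of an already-proved statement in the excerpt applied to $K=\mathbb P_k$.
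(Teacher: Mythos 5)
Your proposal is correct and matches the paper's route: the corollary is presented there without a separate proof, precisely as the collection of the preceding general results (the composition/tensor characterizations of $K^\circ$, Theorems \ref{RMP1} and \ref{RMP2}, and the proposition $(\id_A\ot\phi)(\choi_{\id_A})=\choi_\phi$) specialized to $K=\mathbb P_k$, $K^\circ=\superpos_k$, with (i)$\Leftrightarrow$(ii) coming from $\choi_{\ad_s}=|\tilde s\ran\lan\tilde s|$. Your adjoint bookkeeping (trading $\psi^*$ for $\psi$ inside the quantifiers, using that $\mathbb P_k$ is a two-sided mapping cone) is exactly the point that makes the clauses come out as written.
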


If $s=|\xi\ran\lan\eta|$ is of rank one matrix, then we have
$$
\ad_s(a)=|\eta\ran\lan\xi|a|\xi\ran\lan\eta|=\lan \xi|a|\xi\ran\,  |\eta\ran\lan\eta|
=\lan a,|\bar\xi\ran\lan\bar\xi|\ran\, |\eta\ran\lan\eta|.
$$
Therefore, we see that $\phi$ is $1$-superpositive if and only if $\phi$ is of the following form
$$
\phi(a)=\sum_{k} \lan a, v_k\ran u_k
$$
with positive matrices $u_k$ and $v_k$. This is called the {\sl Holevo form} \cite{holovo98}.

In the paper \cite{hsrus}, a map $\phi$ was called  {\sl entanglement breaking}
when $\phi$ satisfies the the condition (v) with $k=1$,
and it was shown that conditions (i), (iii), (iv) and (xii) are equivalent to (v) together with the Holevo form.
On the other hand, a map was called in \cite{ando-04} superpositive
when its Choi matrix is separable.
Equivalent conditions (ii), (iii) and (v) for $k$-superpositive maps
were given in \cite{cw-EB}.

\section{Mapping cones arising from ampliation and factorizations}\label{constructio_MC}

A linear map $\phi:M_A\to M_B$ is called {\sl $k$-entanglement breaking} \cite{Christandl19}
if its ampliation map $\id_k\ot\phi:M_k\ot M_A\to M_k\ot M_B$ sends $(M_k\ot M_A)^+$ into ${\mathcal S}_1$.
The convex cone ${\mathbb E\mathbb B}_k$ of all $k$-entanglement breaking maps is also a mapping cone \cite{dms}.
Comparing with the mapping cones $\mathbb P_k$ and $\superpos_k$, the inclusion relations
can be summarized in {\sc Figure 2}.
Looking for mapping cones inside of the triangle in {\sc Figure 2}, it is natural to consider the convex cone
\begin{equation}\label{def-amp}
\amp_k[K,L]
:=\{\phi\in H(M_A,M_B): (\id_k\ot\phi)(\choi_{K})\subset \choi_{L}\},
\end{equation}
for $k=1,2,\dots$, closed convex cones $K\subset H(M_k,M_A)$ and $L\subset H(M_k,M_B)$.
Then we have
$$
\mathbb P_k =\amp_k[\cp,\cp],\qquad
\mathbb E\mathbb B_k=\amp_k[\cp,\superpos_1].
$$

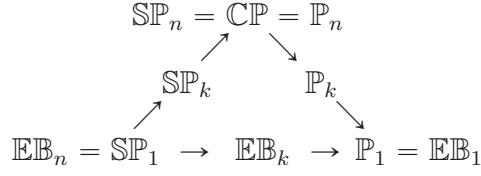
\begin{figure}\label{fig2}
\begin{center}
\setlength{\unitlength}{1 truecm}
\begin{picture}(5,2.5)
\put(-0.3,0){${\mathbb E\mathbb B}_n=\superpos_1  \ \rightarrow\ \, \mathbb E\mathbb B_k\  \rightarrow\, \mathbb P_1 =\mathbb E\mathbb B_1$}
\put(1.3,0.5){$\nearrow$}
\put(1.7,0.9){$\superpos_k$}
\put(2.2,1.4){$\nearrow$}
\put(1.3,1.8){$\superpos_n=\cp=\mathbb P_n$}
\put(3.1,1.4){$\searrow$}
\put(3.6,0.9){$\mathbb P_k$}
\put(4,0.5){$\searrow$}
\end{picture}
\end{center}
\caption{Inclusion relations when $M_A$ and $M_B$ are $n\times n$ matrices, where $X\to Y$ means $X\subset Y$.}
\end{figure}

The first identity of (\ref{amp_buyvyjj}) also shows the following relation
\begin{equation}\label{exam2}
(K\circ\cp_{AA})^\circ =\amp_a[\cp,K^\circ].
\end{equation}
Especially, we see that every mapping cone $K\subset H(M_A,M_B)$ can be expressed by
$$
K=(K^\circ\circ\cp_{AA})^{\circ}=\amp_a[\cp,K].
$$
This is nothing but Theorem \ref{RMP1} (ii). On other hand, the first identity of (\ref{amp_ex_cp})
may be written as $\lan\phi\circ\sigma^*,\psi\ran=\lan(\id_A\ot\psi^*)(\choi_\phi),\choi_\sigma\ran$, from which we have
\begin{equation}\label{mbygdcxsvf}
\amp_a[K,\cp]^*=(K\circ\cp_{AA})^\circ.
\end{equation}

As special cases of the identity (\ref{fund}), we have the identities
$$
(\id_k\ot\phi)(\choi_\sigma)=\choi_{\phi\circ\sigma},
\qquad
(\phi\ot\id_k)(\choi_{\sigma^*})=\choi_{\sigma^*\circ\phi^*},
$$
for $\sigma\in H(M_k,M_A)$ and $\phi\in H(M_A,M_B)$.
These identities imply the following:

\begin{proposition}\label{basic-def}
Suppose that $K$ and $L$ are closed convex cones in $H(M_k,M_A)$ and $H(M_k,M_B)$, respectively.
For $\phi\in H(M_A,M_B)$, the following are equivalent:
\begin{enumerate}
\item[(i)]
$\phi\in \amp_k[K,L]$, that is, $\id_k\ot\phi$ sends $\choi_{K}$ into $\choi_{L}$,
\item[(ii)]
For every $\sigma\in K$, we have $\phi\circ\sigma\in L$,
\item[(iii)]
For every $\sigma\in K$, we have $\sigma^*\circ\phi^*\in L^*$,
\item[(iv)]
$\phi\ot\id_k$ sends $\choi_{K^*}$ into  $\choi_{L^*}$.
\end{enumerate}
\end{proposition}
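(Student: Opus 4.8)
The plan is to reduce everything to the two specializations of the identity (\ref{fund}) displayed just before the statement, namely $(\id_k\ot\phi)(\choi_\sigma)=\choi_{\phi\circ\sigma}$ and $(\phi\ot\id_k)(\choi_{\sigma^*})=\choi_{\sigma^*\circ\phi^*}$ for $\sigma\in H(M_k,M_A)$ and $\phi\in H(M_A,M_B)$, combined with two elementary facts: the Jamio\l kowski--Choi correspondence $\psi\mapsto\choi_\psi$ is a linear bijection (so $\choi_\psi\in\choi_L$ if and only if $\psi\in L$, and $\choi_{K}=\{\choi_\sigma:\sigma\in K\}$, $\choi_{K^*}=\{\choi_{\sigma^*}:\sigma\in K\}$), and passing to the dual map is an involutive bijection $L\leftrightarrow L^*$ that reverses composition, $(\phi\circ\sigma)^*=\sigma^*\circ\phi^*$, by (\ref{pair_dual}).

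For (i)$\Leftrightarrow$(ii): by the definition (\ref{def-amp}), $\phi\in\amp_k[K,L]$ means $(\id_k\ot\phi)(\choi_\sigma)\in\choi_{L}$ for every $\sigma\in K$; by the first identity the left-hand side is $\choi_{\phi\circ\sigma}$, and by bijectivity of the Choi correspondence this belongs to $\choi_{L}$ exactly when $\phi\circ\sigma\in L$. For (ii)$\Leftrightarrow$(iii): since $\psi\mapsto\psi^*$ maps $L$ bijectively onto $L^*$ and $(\phi\circ\sigma)^*=\sigma^*\circ\phi^*$, the condition $\phi\circ\sigma\in L$ is equivalent to $\sigma^*\circ\phi^*\in L^*$ for each fixed $\sigma\in K$. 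For (iii)$\Leftrightarrow$(iv): $\phi\ot\id_k$ sends $\choi_{K^*}$ into $\choi_{L^*}$ means $(\phi\ot\id_k)(\choi_{\sigma^*})\in\choi_{L^*}$ for every $\sigma\in K$; by the second identity this is $\choi_{\sigma^*\circ\phi^*}\in\choi_{L^*}$, which again by bijectivity says $\sigma^*\circ\phi^*\in L^*$.

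I do not expect any genuine obstacle here: once the two special cases of (\ref{fund}) are in hand, the proof is a chain of tautologies. The only thing requiring care is bookkeeping of domains and codomains, so that the compositions $\phi\circ\sigma\colon M_k\to M_B$ and $\sigma^*\circ\phi^*\colon M_B\to M_k$ and the ampliations $\id_k\ot\phi\colon M_k\ot M_A\to M_k\ot M_B$ and $\phi\ot\id_k\colon M_A\ot M_k\to M_B\ot M_k$ are all defined on the correct spaces, and noting that ``for every $\sigma\in K$'' in (iii) is literally the same family of conditions as ``for every $\tau\in K^*$'' since $\tau=\sigma^*$ runs over $K^*$ as $\sigma$ runs over $K$. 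I would present the argument either as the cycle (i)$\Rightarrow$(ii)$\Rightarrow$(iii)$\Rightarrow$(iv)$\Rightarrow$(i) or, equivalently and perhaps more transparently, as the three displayed biconditionals above.
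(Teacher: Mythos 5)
Your proof is correct and is exactly the argument the paper intends: the paper states the two specializations $(\id_k\ot\phi)(\choi_\sigma)=\choi_{\phi\circ\sigma}$ and $(\phi\ot\id_k)(\choi_{\sigma^*})=\choi_{\sigma^*\circ\phi^*}$ of the identity (\ref{fund}) and then simply asserts ``These identities imply the following,'' leaving the routine bookkeeping (bijectivity of the Choi correspondence and $(\phi\circ\sigma)^*=\sigma^*\circ\phi^*$) implicit. You have merely written out that bookkeeping explicitly, so there is nothing to add.
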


The equivalence between (i) and (iv) tells us that both the left and right ampliations basically give rise to
the same class of maps, in most cases. Note that ${K^*}\subset H(M_A, M_k)$ and ${L^*}\subset H(M_B, M_k)$.
It is clear that $\amp_k[K,L]$ is a closed convex cone.
The identities in (\ref{exam2}) and (\ref{mbygdcxsvf})
can be extended in much more general situations as follows:

\begin{proposition}\label{dual_amp}
For closed convex cones $K\subset H(M_k,M_A)$ and $L\subset
H(M_k,M_B)$, we have the following identities
$$
\amp_k[K,L]=(L^\circ\circ K^*)^{\circ},\qquad
\amp_k[K,L]^*=\amp_k[L^\circ,K^\circ].
$$
\end{proposition}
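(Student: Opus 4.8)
The plan is to derive both identities from the equivalence (i)$\Leftrightarrow$(ii) of Proposition \ref{basic-def} together with the formal composition/pairing rules collected in (\ref{pair_dual}). For the first identity I would start from the observation that $\phi\in\amp_k[K,L]$ holds precisely when $\phi\circ\sigma\in L$ for every $\sigma\in K$. Since $L$ is a closed convex cone, $L=L^{\circ\circ}$, so $\phi\circ\sigma\in L$ is equivalent to $\lan\phi\circ\sigma,\tau\ran\ge 0$ for all $\tau\in L^\circ$. Using the identity $\lan\psi\circ\phi,\sigma\ran=\lan\psi,\sigma\circ\phi^*\ran$ from (\ref{pair_dual}) (read with $\psi\rightsquigarrow\phi$, $\phi\rightsquigarrow\sigma$, $\sigma\rightsquigarrow\tau$), this rewrites as $\lan\phi,\tau\circ\sigma^*\ran\ge 0$. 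Hence $\phi\in\amp_k[K,L]$ if and only if $\lan\phi,\rho\ran\ge 0$ for every $\rho$ in the set $\{\tau\circ\sigma^*:\tau\in L^\circ,\ \sigma\in K\}=L^\circ\circ K^*$, that is, $\phi\in(L^\circ\circ K^*)^\circ$. Along the way one checks that $L^\circ\circ K^*\subset H(M_A,M_B)$, so the dual cone is taken in the correct mapping space.

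For the second identity I would first record two routine facts, both immediate from (\ref{pair_dual}): $(C_1\circ C_2)^*=C_2^*\circ C_1^*$ for composable sets of maps, coming from $(\alpha\circ\beta)^*=\beta^*\circ\alpha^*$; and $(C^\circ)^*=(C^*)^\circ$ for any subset $C$, coming from $\lan\phi,\psi\ran=\lan\phi^*,\psi^*\ran$. Combining these with $K^{**}=K$ and the first identity gives
$$\amp_k[K,L]^*=((L^\circ\circ K^*)^\circ)^*=((L^\circ\circ K^*)^*)^\circ=(K^{**}\circ(L^\circ)^*)^\circ=(K\circ(L^\circ)^*)^\circ.$$
On the other hand, applying the first identity to the pair $L^\circ\subset H(M_k,M_B)$ and $K^\circ\subset H(M_k,M_A)$, and using $(K^\circ)^\circ=K$ (valid since $K$ is closed and convex), gives $\amp_k[L^\circ,K^\circ]=((K^\circ)^\circ\circ(L^\circ)^*)^\circ=(K\circ(L^\circ)^*)^\circ$. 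The two right-hand sides coincide, which is the claim.

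I do not expect a genuine obstacle; the argument is essentially bookkeeping with adjoints and dual cones. The points requiring care are keeping track of domains and codomains so that every composition and every dual cone lands in the intended space — in particular that $\amp_k[L^\circ,K^\circ]\subset H(M_B,M_A)$ matches $\amp_k[K,L]^*$ — and noticing that neither Proposition \ref{basic-def} nor the bidualities $L=L^{\circ\circ}$, $K=K^{\circ\circ}$ require any positivity, so the bare hypothesis of closed convex cones suffices. If there is any delicate point it is the verification of $(C^\circ)^*=(C^*)^\circ$, but this is a one-line consequence of the identity $\lan\phi,\psi\ran=\lan\phi^*,\psi^*\ran$ in (\ref{pair_dual}).
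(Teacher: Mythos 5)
Your proof is correct and follows essentially the same route as the paper: the first identity is obtained from the pairing identity $\lan\phi\circ\sigma,\psi\ran=\lan\phi,\psi\circ\sigma^*\ran$ together with the bipolarity $L=L^{\circ\circ}$ (the paper writes this as two inclusions, you as a chain of equivalences), and the second identity is the same formal computation with $(C_1\circ C_2)^*=C_2^*\circ C_1^*$ and $(C^\circ)^*=(C^*)^\circ$, merely carried out in the reverse direction (the paper shows $\amp_k[L^\circ,K^\circ]^*=\amp_k[K,L]$, you show $\amp_k[K,L]^*=\amp_k[L^\circ,K^\circ]$, which are equivalent since $*$ is an involution).
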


\begin{proof}
Suppose that $\phi\in\amp_k[K,L]$. Then for arbitrary given $\psi\in L^\circ$ and $\sigma\in K$, we have
$$
\lan \psi\circ\sigma^*,\phi\ran=\lan\psi,\phi\circ\sigma\ran\ge 0,
$$
since $\phi\circ\sigma\in L$. This implies that $\phi\in (L^\circ\circ K^*)^{\circ}$,
and we get 
$\amp_k[K,L]\subset (L^\circ\circ K^*)^{\circ}$.
For the reverse inclusion, suppose that $\phi\in (L^\circ\circ K^*)^{\circ}$ and $\sigma\in K$.
Then for every $\psi\in L^\circ$, we have
$$
\lan\phi\circ\sigma,\psi\ran=\lan \phi,\psi\circ\sigma^*\ran\ge 0,
$$
and so we have $\phi\circ\sigma\in L$. By Proposition \ref{basic-def} (ii), we have $\phi\in\amp_k[K,L]$.
This shows $(L^\circ \circ K^*)^\circ\subset \amp_k[K,L]$, and completes the proof of the first identity.
By the first identity, we have
$$
\amp_k[L^\circ,K^\circ]^*=
(K\circ L^{\circ*})^{\circ *}
=(K\circ L^{\circ*})^{*\circ}=(L^\circ\circ K^*)^\circ
=\amp_k[K,L],
$$
which shows the second identity.
\end{proof}

Motivated by Proposition \ref{dual_amp}, we consider the factorization properties.
A linear map $\phi:M_A\to M_B$ is called {\sl factorized through} $M_k$ when there exists $\sigma:M_A\to M_k$
and $\tau:M_k\to M_B$ such that $\phi=\tau\circ\sigma$.
\begin{center}
\setlength{\unitlength}{0.05 truecm}
\begin{picture}(120,65)
\put (0,0){$M_A$}
\put (17,3){\vector (1,0){86}}
\put (114,0){$M_B$}
\put (59,7){$\phi$}
\put (10,11){\vector(1,1){40}}
\put (56,50){$M_k$}
\put (72,51){\vector (1,-1){40}}
\put (97,29){$\tau$}
\put (22,29){$\sigma$}
\end{picture}
\end{center}
For a given natural number $k=1,2,\dots$, and closed convex cones $K\subset H(M_k,M_A)$, $L\subset H(M_k,M_B)$, we define
$$
\mathbb F_k[K,L]
=\{\tau\circ\sigma^*:\sigma\in K,\ \tau\in L\}^{\circ\circ}.
$$
So, maps in $\mathbb F_k[K,L]$ are finite sums of linear maps which are factorized through $M_k$ as
compositions $\tau\circ\sigma^*$ with $\sigma\in K$ and $\tau\in L$.
By Proposition \ref{dual_amp}, we have the following:

\begin{theorem}\label{the_dual_amp}
For a given natural number $k=1,2,\dots$, and closed convex cones $K\subset H(M_k,M_A)$, $L\subset H(M_k,M_B)$, we have
$$
\amp_k[K,L]^\circ=\mathbb F_k[K,L^\circ].
$$
\end{theorem}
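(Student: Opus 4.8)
The plan is to deduce the identity $\amp_k[K,L]^\circ=\mathbb F_k[K,L^\circ]$ directly from the first identity in Proposition \ref{dual_amp}, namely $\amp_k[K,L]=(L^\circ\circ K^*)^{\circ}$, by taking dual cones of both sides. Applying $\circ$ to $\amp_k[K,L]=(L^\circ\circ K^*)^{\circ}$ gives
$$
\amp_k[K,L]^\circ=(L^\circ\circ K^*)^{\circ\circ},
$$
and since $(\cdot)^{\circ\circ}$ is the smallest closed convex cone containing a set (as recalled just after the definition of dual cone in Section 2), the right-hand side is exactly the closed convex cone generated by $\{\psi\circ\sigma^*:\psi\in L^\circ,\ \sigma\in K\}$.

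Next I would compare this with the definition of $\mathbb F_k[K,L^\circ]$. Unwinding that definition with $L$ replaced by $L^\circ$, we get
$$
\mathbb F_k[K,L^\circ]=\{\tau\circ\sigma^*:\sigma\in K,\ \tau\in L^\circ\}^{\circ\circ},
$$
which is precisely the closed convex cone generated by $\{\tau\circ\sigma^*:\sigma\in K,\ \tau\in L^\circ\}$. Since the sets $\{\psi\circ\sigma^*:\psi\in L^\circ,\ \sigma\in K\}$ and $\{\tau\circ\sigma^*:\sigma\in K,\ \tau\in L^\circ\}$ are literally the same set (just renaming $\psi\leftrightarrow\tau$), their $\circ\circ$-closures coincide, and we conclude $\amp_k[K,L]^\circ=\mathbb F_k[K,L^\circ]$.

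I expect this proof to be essentially a one-line bookkeeping argument once Proposition \ref{dual_amp} is in hand, so there is no serious obstacle — the only point requiring a word of care is making sure $L^\circ\circ K^*$ is interpreted correctly as the set of all compositions $\psi\circ\sigma^*$ with $\psi\in L^\circ\subset H(M_k,M_B)$ and $\sigma\in K\subset H(M_k,M_A)$, so that $\sigma^*\in H(M_A,M_k)$ and the composition $\psi\circ\sigma^*:M_A\to M_B$ is well-defined and lands in the correct mapping space $H(M_A,M_B)$, matching the ambient space in which $\amp_k[K,L]$ lives. Once that type-checking is noted, the chain $\amp_k[K,L]^\circ=(L^\circ\circ K^*)^{\circ\circ}=\{\tau\circ\sigma^*:\sigma\in K,\ \tau\in L^\circ\}^{\circ\circ}=\mathbb F_k[K,L^\circ]$ is immediate, and the proof is complete.
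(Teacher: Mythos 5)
Your proposal is correct and is essentially the paper's own argument: the paper derives Theorem \ref{the_dual_amp} directly from the first identity of Proposition \ref{dual_amp} by dualizing, exactly as you do, with $(L^\circ\circ K^*)^{\circ\circ}$ recognized as the closed convex cone generated by the compositions $\tau\circ\sigma^*$, i.e.\ $\mathbb F_k[K,L^\circ]$. Your type-checking remark about $L^\circ\circ K^*$ landing in $H(M_A,M_B)$ is a sensible precaution but raises no issue.
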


It is easily seen that the relation $\mathbb F_k[\cp,\cp]=\superpos_k$ holds. This recovers
$\superpos_k^\circ=\amp_k[\cp,\cp]=\mathbb P_k$ by Theorem \ref{the_dual_amp}.
We also have $\mathbb E\mathbb B_k^\circ=\mathbb F_k[\cp,\mathbb P_1]$, which recovers \cite[Theorem 3.13]{dms}.

Now, we look for conditions with which $\amp_k[K,L]$ and/or $\mathbb F_k[K,L]$ are mapping cones.
To do this, we use the condition (ii) of Proposition \ref{basic-def}.

\begin{lemma}
For closed convex cones $K\subset H(M_k,M_A)$ and $L\subset H(M_k,M_B)$,
we have the following:
\begin{enumerate}
\item[(i)]
If $\cp_B\circ L\subset L$ then $\cp_B\circ \amp_k[K,L]\subset \amp_k[K,L]$.
\item[(ii)]
If $\cp_A\circ K\subset K$ then $\amp_k[K,L]\circ\cp_A\subset \amp_k[K,L]$.
\end{enumerate}
\end{lemma}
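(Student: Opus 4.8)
The plan is to verify both inclusions by unwinding the definition of $\amp_k[K,L]$ through the characterization in Proposition \ref{basic-def}, specifically the equivalence of (i) and (ii): $\phi\in\amp_k[K,L]$ if and only if $\phi\circ\sigma\in L$ for every $\sigma\in K$. This turns a statement about ampliations acting on Choi matrices into a purely algebraic statement about compositions of maps, which interacts cleanly with the hypotheses $\cp_B\circ L\subset L$ and $\cp_A\circ K\subset K$.

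For part (i), I would take $\rho\in\cp_B=\cp_{BB}$ and $\phi\in\amp_k[K,L]$, and show $\rho\circ\phi\in\amp_k[K,L]$. By Proposition \ref{basic-def}(ii) it suffices to check $(\rho\circ\phi)\circ\sigma\in L$ for every $\sigma\in K$. Since composition is associative, $(\rho\circ\phi)\circ\sigma=\rho\circ(\phi\circ\sigma)$, and $\phi\circ\sigma\in L$ because $\phi\in\amp_k[K,L]$. Then $\rho\circ(\phi\circ\sigma)\in\cp_B\circ L\subset L$ by hypothesis, which is exactly what is needed. For part (ii), I would take $\rho\in\cp_A=\cp_{AA}$ and $\phi\in\amp_k[K,L]$, and show $\phi\circ\rho\in\amp_k[K,L]$; again by Proposition \ref{basic-def}(ii) it suffices to check $(\phi\circ\rho)\circ\sigma\in L$ for $\sigma\in K$. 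Here $(\phi\circ\rho)\circ\sigma=\phi\circ(\rho\circ\sigma)$, and $\rho\circ\sigma\in\cp_A\circ K\subset K$ by hypothesis, so $\phi\circ(\rho\circ\sigma)\in L$ because $\phi\in\amp_k[K,L]$ and $\rho\circ\sigma\in K$. This gives the claim.

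There is essentially no obstacle here; the only point requiring a small amount of care is making sure the composability dimensions line up — in part (ii) the map $\rho\in\cp_{AA}$ must be composed on the $M_A$ side so that $\phi\circ\rho:M_A\to M_B$ is defined, and $\rho\circ\sigma:M_k\to M_A$ is defined since $\sigma:M_k\to M_A$. One should also note in passing that $\amp_k[K,L]$ is a closed convex cone (already observed after Proposition \ref{basic-def}), so the inclusions $\cp_B\circ\amp_k[K,L]\subset\amp_k[K,L]$ and $\amp_k[K,L]\circ\cp_A\subset\amp_k[K,L]$ are the substantive content. If one additionally wants $\amp_k[K,L]$ to consist of positive maps (as required in the definition of a mapping cone), that follows as soon as $K$ contains enough maps — e.g. when $\id_k$ or some completely positive surjection lies in the relevant cone — but the lemma as stated only asserts the composition-stability properties, so I would not belabor this.

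Finally I would remark that combining (i) and (ii) shows that $\amp_k[K,L]$ is a mapping cone whenever $K$ is a right mapping cone in $H(M_k,M_A)$ (i.e. $\cp_A\circ K\subset K$, equivalently $K\circ\cp_k\subset K$ after transposing the roles — but here the relevant condition is the one stated) and $L$ is a left mapping cone in $H(M_k,M_B)$ (i.e. $\cp_B\circ L\subset L$); dually, via Theorem \ref{the_dual_amp}, one then obtains that $\mathbb F_k[K,L]$ is a mapping cone under the dual hypotheses, since $\mathbb F_k[K,L]=\amp_k[L^\circ,K^\circ]^\circ$ and taking dual cones preserves the mapping-cone property as noted earlier in the excerpt.
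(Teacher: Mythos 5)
Your proof is correct and is essentially identical to the paper's: both reduce the claim via Proposition \ref{basic-def}(ii) to the composition characterization $\phi\circ\sigma\in L$ for all $\sigma\in K$, and then use associativity together with the hypotheses $\cp_B\circ L\subset L$ and $\cp_A\circ K\subset K$. The additional remarks on dimensions and on the mapping-cone consequences are consistent with the surrounding discussion in the paper.
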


\begin{proof}
Suppose that $\cp_B\circ L\subset L$ holds, and take $\tau\in\cp_B$ and $\phi\in \amp_k[K,L]$.
For any $\sigma\in K$, we have $\phi\circ\sigma\in L$ by the condition (ii) of Proposition \ref{basic-def}. This implies that
$(\tau\circ\phi)\circ\sigma=\tau\circ(\phi\circ\sigma)\in \cp_B\circ L\subset L$. Therefore, we have $\tau\circ\phi\in \amp_k[K,L]$
by (ii) of Proposition \ref{basic-def}.

For (ii), suppose $\cp_A\circ K\subset K$, and take $\phi\in \amp_k[K,L]$ and $\tau\in\cp_A$.
Then, for every $\sigma\in K$, we have $\tau\circ\sigma\in \cp_A\circ K\subset K$, which implies
$(\phi\circ\tau)\circ\sigma=\phi\circ(\tau\circ\sigma)\in L$. Therefore, we have $\phi\circ\tau\in \amp_k[K,L]$.
\end{proof}

If both convex cones $K\subset H(M_k,M_A)$ and $L\subset H(M_k,M_B)$ consist of positive maps then it is clear that
$\mathbb F_k(K,L)$ also consists of positive maps. Since the dual cone of a mapping cone is also a mapping cone, we have the following:

\begin{theorem}
Suppose that both $K$ and $L$ are left mapping cones.
Then the convex cones $\amp_k[K,L]$ and $\mathbb F_k[K,L^\circ]$ are mapping cones.
\end{theorem}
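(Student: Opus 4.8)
The plan is to obtain both assertions directly from the Lemma that precedes the statement, from Theorem~\ref{the_dual_amp}, and from the fact recorded just above it that the dual cone of a mapping cone is again a mapping cone; no fresh computation is needed.

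First I would unwind the hypotheses. Since $K\subset H(M_k,M_A)$, the statement that $K$ is a left mapping cone is precisely the inclusion $\cp_A\circ K\subset K$, and since $L\subset H(M_k,M_B)$, the statement that $L$ is a left mapping cone is precisely $\cp_B\circ L\subset L$. These are exactly the hypotheses of parts (ii) and (i) of the Lemma. Feeding them in, the Lemma yields $\amp_k[K,L]\circ\cp_A\subset\amp_k[K,L]$ and $\cp_B\circ\amp_k[K,L]\subset\amp_k[K,L]$; composing the two, $\cp_B\circ\amp_k[K,L]\circ\cp_A\subset\cp_B\circ\amp_k[K,L]\subset\amp_k[K,L]$, so $\amp_k[K,L]$ is both a left and a right mapping cone, hence a mapping cone. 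It is already a closed convex cone by the remark after~(\ref{def-amp}), and it consists of positive maps: by Proposition~\ref{basic-def}(ii) every $\phi\in\amp_k[K,L]$ satisfies $\phi\circ\sigma\in L$ for all $\sigma\in K$, so testing on rank-one maps $\ad_s\in\superpos_1\subset K$ and using that $L$ consists of positive maps forces $\phi$ to be positive, exactly as in the discussion immediately preceding the statement. (This is the only place where one uses the ambient assumption that $\superpos_1\subset K$ and $\superpos_1\subset L\subset\mathbb P_1$.)

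For the second cone, Theorem~\ref{the_dual_amp} gives $\mathbb F_k[K,L^\circ]=\amp_k[K,L]^\circ$, and since $\amp_k[K,L]$ has just been shown to be a mapping cone while the dual cone of a mapping cone is again a mapping cone, $\mathbb F_k[K,L^\circ]$ is a mapping cone. I expect the only genuine subtlety to be bookkeeping about sides: one must check that the one-sidedness appearing in the hypotheses (\emph{left}, not right) lines up with the side on which completely positive maps are composed in the Lemma — which it does, because $K$ and $L$ are cones of maps emanating from $M_k$, so ``left'' refers to their $M_A$, respectively $M_B$, end — and one must keep track of which space each dual cone is taken in. Apart from that, the argument is purely the Lemma, Theorem~\ref{the_dual_amp}, and the stability of the class of mapping cones under duality.
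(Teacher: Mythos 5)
Your proposal is correct and follows essentially the same route the paper intends: the Lemma supplies the left and right $\cp$-stability of $\amp_k[K,L]$, positivity is handled exactly as in the surrounding discussion, and $\mathbb F_k[K,L^\circ]=\amp_k[K,L]^\circ$ (Theorem~\ref{the_dual_amp}) is then a mapping cone because duals of mapping cones are mapping cones. Your bookkeeping of the ``left'' conditions on $K$ and $L$ matches the hypotheses of the Lemma, so there is nothing to add.
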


Therefore, we see that $\mathbb E\mathbb B_k$ is a mapping cone, as it was shown in \cite{dms}.
The relation (\ref{exam2}) recovers the fact that $(K\circ\cp)^{\circ\circ}$ is a mapping cone whenever $K$ is a left mapping cone.
It is clear that $\mathbb F_k[K,L]$ is increasing with respect to the both variables $K$ and $L$.
On the other hand, $\amp_k[K,L]$ are increasing with respect to the variable $L$ and decreasing with respect to $K$.
We consider the case of $K=\cp$ to get various mapping cones.

We fix $M_A=M_B=M_n$, $K=\cp$ and consider the cases of $L=\superpos_\ell$ or $L=\mathbb P_\ell$ with $\ell=1,2,\dots,n$.
If $\ell\ge k$, then we have $\superpos_k=\superpos_\ell=\cp=\mathbb P_\ell=\mathbb P_k$
in $H(M_k,M_n)$. Therefore, all the mapping cones in the following inclusions
$$
\amp_k[\cp,\superpos_k]\subset
\amp_k[\cp,\superpos_\ell]\subset
\amp_k[\cp,\cp]\subset
\amp_k[\cp,\mathbb P_\ell]\subset
\amp_k[\cp,\mathbb P_k]
$$
coincide with $\mathbb P_k$.
We also see that all the inclusions
$$
\mathbb P_k=\amp_n[\cp,\mathbb P_k]\subset \amp_\ell[\cp,\mathbb P_k]\subset\amp_k[\cp,\mathbb P_k]=\mathbb P_k
$$
become identities. In short, we have
$\amp_k[\cp,\mathbb P_\ell]=\mathbb P_{\min\{k,\ell\}}$.
The inclusion relations among all the remaining mapping cones are summarized in {\sc Figure 3}
with the notations
$$
\begin{aligned}
K_{k,\ell}&:=\amp_k[\cp,\superpos_\ell]
=\{\phi\in H(M_A,M_B): (\id_k\ot\phi)({\mathcal P})\subset {\mathcal S}_\ell\},\\
\end{aligned}
$$
for $k,\ell=1,2,\dots,n$ with $k\ge \ell$.

\begin{figure}\label{fig3}
\begin{center}
\setlength{\unitlength}{1 truecm}
\begin{picture}(12,6)
\put(-0.3,1){$\superpos_1=K_{n,1}\ \rightarrow\ K_{n-1,1}\ \rightarrow$}
\put(5.6,1){$\rightarrow\ K_{3,1}\ \ \rightarrow\ K_{2,1}\ \ \rightarrow\ K_{1,1}=\mathbb P_1$}
\put(0.8,2){$\superpos_2=K_{n,2} \ \rightarrow\ K_{n-1,2}\ \rightarrow$}
\put(6.3,2){$\rightarrow\ K_{3,2}\ \rightarrow\ K_{2,2}=\mathbb P_2$}
\put(2.2,4){$\superpos_{n-1}=K_{n,n-1}$}
\put(6.2,4){$K_{n-1,n-1}=\mathbb P_{n-1}$}
\put(3.9,5.4){$\superpos_n=K_{n,n}=\mathbb P_n=\cp$}
\put(1.5,1.5){$\nearrow$}
\put(2.5,2.5){$\nearrow$}
\put(3.4,3.4){$\nearrow$}
\put(4.7,4.7){$\nearrow$}
\put(6,4.7){$\searrow$}
\put(7.3,3.4){$\searrow$}
\put(8.2,2.5){$\searrow$}
\put(9.2,1.5){$\searrow$}
\put(3.5,1.5){$\nearrow$}
\put(4.5,2.5){$\nearrow$}
\put(6.6,1.5){$\nearrow$}
\put(8.6,1.5){$\nearrow$}
\put(7.6,2.5){$\nearrow$}
\put(1,0){$\mathbb E\mathbb B_n$}
\put(1.2,0.5){$\shortparallel$}
\put(2.8,0){$\mathbb E\mathbb B_{n-1}$}
\put(3.2,0.5){$\shortparallel$}
\put(6.3,0){$\mathbb E\mathbb B_3$}
\put(6.5,0.5){$\shortparallel$}
\put(8.3,0){$\mathbb E\mathbb B_2$}
\put(8.5,0.5){$\shortparallel$}
\put(10.3,0){$\mathbb E\mathbb B_1$}
\put(10.5,0.5){$\shortparallel$}
\put(4.8,1){$\cdots$}
\put(5.6,2){$\cdots$}
\put(5.4,4){$\rightarrow$}
\put(3,2.8){$\cdot$}\put(3.1,2.9){$\cdot$}\put(3.2,3){$\cdot$}
\put(8,2.8){$\cdot$}\put(7.9,2.9){$\cdot$}\put(7.8,3){$\cdot$}
\end{picture}
\end{center}
\caption{}
\end{figure}

In order to distinguish $k$-positivities, Tomiyama \cite{tom_85}
considered the linear map $\phi_\lambda:M_n\to M_n$ given by
$$
\phi_\lambda(x)=\lambda\tr(x)I_n-x,\qquad x\in M_n,
$$
with the parameter $\lambda\ge 1$, and showed that $\phi_\lambda$ is $k$-positive if and only if $\lambda\ge k$.
See also \cite{tomiyama-83}. The map $\phi_{n-1}$ was the example of Choi \cite{choi72}
to distinguish $n$-positivity and $(n-1)$-positivity.
It was shown in \cite{{chen_chit_2020},{dms}} that $\phi_k$
is even $k$-entanglement breaking as well as $k$-positive. Therefore, we have
$\mathbb E\mathbb B_k\nsubset \mathbb P_{k+1}$.
The following tells us that there is no more possible inclusion relations in {\sc Figure 3}.

\begin{proposition}
Suppose that $k,\ell,p,q=1,2,\dots,n$ with $k\ge \ell$ and $p\ge q$. Then $K_{k,\ell}\subset K_{p,q}$ if and only if
the following two conditions are satisfied:
\begin{enumerate}
\item[(i)]
$k\ge p$,
\item[(ii)]
$\ell\le q$ or $\ell>q=p$.
\end{enumerate}
\end{proposition}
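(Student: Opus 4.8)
The plan is to derive both directions from a handful of monotonicity facts about the cones $K_{k,\ell}=\amp_k[\cp,\superpos_\ell]$ together with two explicit witness maps. First I would record the tools. The cone $K_{k,\ell}$ is increasing in $\ell$, directly from $\superpos_\ell\subseteq\superpos_{\ell+1}$ (equivalently ${\mathcal S}_\ell\subseteq{\mathcal S}_{\ell+1}$). It is also decreasing in the ampliation dimension, i.e.\ $K_{k,\ell}\subseteq K_{p,\ell}$ whenever $k\ge p$: the short argument is to view $M_p$ as the top-left corner of $M_k$, so that for $\varrho\ge 0$ in $M_p\ot M_n$ one has $(\id_k\ot\phi)(\varrho)=(\id_p\ot\phi)(\varrho)$ supported in that corner, and a matrix supported on the corner lies in ${\mathcal S}_\ell$ of $M_k\ot M_n$ iff it lies in ${\mathcal S}_\ell$ of $M_p\ot M_n$, since any decomposition into Schmidt-rank-$\le\ell$ rank-one projections must use vectors from the range of the matrix, and Schmidt rank is unchanged when a vector in $\mathbb C^p\ot\mathbb C^n$ is regarded inside $\mathbb C^k\ot\mathbb C^n$. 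I will also use the identifications $K_{k,k}=\mathbb P_k$ and $K_{k,1}=\mathbb E\mathbb B_k$ recorded above, together with the collapse $\amp_p[\cp,\superpos_\ell]=\amp_p[\cp,\cp]=\mathbb P_p$, valid whenever $\ell\ge p$, which holds because in $H(M_p,M_n)$ with $p\le n$ every $\ad_s$ has rank at most $p$, so $\superpos_\ell=\cp$ there.

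For the ``if'' direction, assume (i) and (ii) and split on the disjunction defining (ii). If $\ell\le q$ then $K_{k,\ell}\subseteq K_{p,\ell}\subseteq K_{p,q}$, using first $k\ge p$ and then $\ell\le q$. If instead $\ell>q=p$, then automatically $k\ge\ell>p$, so $K_{k,\ell}\subseteq\amp_p[\cp,\superpos_\ell]=\mathbb P_p=K_{p,p}=K_{p,q}$. In both cases $K_{k,\ell}\subseteq K_{p,q}$.

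For the ``only if'' direction I would prove the contrapositive: when (i) fails or (ii) fails, I produce $\phi\in K_{k,\ell}\setminus K_{p,q}$. If (i) fails, i.e.\ $k<p$, I take the Tomiyama map $\phi_k(x)=k\,\tr(x)I_n-x$ on $M_n$; since it is $k$-entanglement breaking (as recalled in the text), $\phi_k\in\mathbb E\mathbb B_k=K_{k,1}\subseteq K_{k,\ell}$, while $\phi_k$ is $p$-positive iff $k\ge p$, so $\phi_k\notin\mathbb P_p\supseteq K_{p,q}$. If (ii) fails, then, combined with $k\ge\ell$ and $p\ge q$, this forces $\ell>q$ and $p>q$; set $r=q+1$, so $r\le\ell\le k$ and $r\le p\le n$, and take $\phi=\ad_u$ with $u=\operatorname{diag}(1,\dots,1,0,\dots,0)\in M_n$ of rank $r$ (so $u=u^*$). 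Since $\id_k\ot\ad_u$ compresses the second tensor factor onto the $r$-dimensional space $\operatorname{ran} u$, every output vector has Schmidt rank $\le r$, hence $\phi\in K_{k,r}\subseteq K_{k,\ell}$; on the other hand, applying $\id_p\ot\ad_u$ to the rank-one state on $\sum_{i=1}^p|i\ran|i\ran\in\mathbb C^p\ot\mathbb C^n$ yields the rank-one state on $\sum_{i=1}^r|i\ran|i\ran$, of Schmidt rank $r=q+1>q$, so $\phi\notin K_{p,q}$. This handles both ways (i) or (ii) can fail, and completes the proof.

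The routine ingredients are the monotonicity bookkeeping and the two short linear-algebra computations with $\ad_u$; the one external input is the cited fact that the Tomiyama map $\phi_k$ is $k$-entanglement breaking, without which the case $k<p$ would need a different witness. The step to get exactly right is the matching between the exceptional clause ``$\ell>q=p$'' of (ii) and the collapse $\amp_p[\cp,\superpos_\ell]=\mathbb P_p$, and, on the other side, the claim that when $p>q$ the simple projection witness $\ad_u$ really escapes $K_{p,q}$: the whole content of the statement is this off-by-one dichotomy in $q$, $\ell$ and $p$, so that is where all the care is needed.
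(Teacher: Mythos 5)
Your proof is correct and follows essentially the same route as the paper: the ``if'' direction from the monotonicity/collapse relations behind the inclusion diagram, the Tomiyama map $\phi_k$ (via $\mathbb E\mathbb B_k\nsubset\mathbb P_{k+1}$) for the necessity of (i), and a rank-restricted $\ad$ map applied to the maximally entangled state on $\mathbb C^p\ot\mathbb C^n$ for (ii). The only cosmetic differences are that you argue the second part contrapositively with a rank-$(q+1)$ projection where the paper uses a rank-$\ell$ one and reads off $\min\{\ell,p\}\le q$, and that you spell out the corner-embedding argument the paper leaves implicit in its figure.
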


\begin{proof}
The \lq if\rq\ part is clear from {\sc Figure 3}. Now, we suppose that $K_{k,\ell}\subset K_{p,q}$ holds.
If $k<p$ then we have
$\mathbb E\mathbb B_k=K_{k,1}\subset K_{k,\ell}\subset K_{p,q}\subset K_{p,p}=\mathbb P_p\subset \mathbb P_{k+1}$,
to get a contradiction. This shows that $k\ge p$.

To prove the condition (ii), we take $s=\sum_{i=1}^\ell e_{i,i}\in
M_n$, and $\varrho=\sum_{i,j=1}^p e_{i,j}\ot e_{i,j}\in M_p \ot M_n$.
Then we have $\ad_s\in \superpos_\ell=K_{n,\ell}\subset
K_{k,\ell}\subset K_{p,q}$ and $\varrho$ is positive in $M_p\ot M_n$. Therefore, we see that
$$
\sum_{i,j=1}^{\min\{\ell,p\}} e_{i,j}\ot e_{i,j}=(\id_p\ot \ad_s)(\varrho)\in M_p\ot M_n
$$
belongs to ${\mathcal S}_q\subset M_p\ot M_n$. This implies $\min\{\ell,p\}\le q$, and
gives rise to the condition (ii), since $p\ge q$.
\end{proof}

In the case of $n=3$, we note that $\phi$ belongs to the mapping cone $K_{2,1}=\mathbb E\mathbb B_2$
if and only if the map $\id_2\ot\phi:M_2\ot M_3\to M_2\ot M_3$ send ${\mathcal P}$ into
${\mathcal S}_1$. Since ${\mathcal S}_1=\ppt$ in $M_2\ot M_3$ \cite{{p-horo},{woronowicz}},
we see \cite{Christandl19} that $\phi\in K_{2,1}$
if and only if $\phi$ is both $2$-positive and $2$-copositive,
that is, $\phi\circ\ttt$ is $2$-positive.
See \cite{cho-kye-lee} for parameterized examples of linear maps between $M_3$ which are
both $2$-positive and $2$-copositive but not completely positive.

\section{PPT square conjecture}\label{PPT-sec}

The PPT square conjecture claims that the composition of two PPT maps is $1$-superpositive, that is, entanglement breaking,
as it was proposed by Christandl in \cite{ppt}. This conjecture which can be expressed by
$$
\pptmap\circ\pptmap\subset\superpos_1,
$$
is supported by several results:
If $\phi$ is a unital or trace preserving PPT map then
$d(\phi^k, \superpos_1)$ tends to $0$ as $k\to\infty$ \cite{kennedy17};
if $\phi$ is a unital PPT map then $\phi^n\in \superpos_1$ for a positive integer $n$ \cite{rahaman18};
the conjecture is true when $\phi$ is a PPT map between $3\times 3$ matrices
\cite{{chen_yany_tang},{Christandl19}}. See also \cite{collins_PPT}.
It was also shown in \cite{Christandl19} that the conjecture (\ref{dec-con}) implies that
a finite iteration of a PPT map is entanglement breaking.

Equivalent claims and possibility to find counterexamples also have been discussed. It was shown in
\cite{m-h_2018} that the existence of a nontrivial tensor-stable positive map implies the negation of the conjecture.
Recall that a map $\phi$ is called {\sl tensor-stable positive} \cite{M-HRW_2016}
if $\phi^{\otimes n}$ is positive for every $n=1,2,\dots$, and
note that completely positive or completely copositive maps are trivially tensor-stable positive.
It was also shown in \cite{Christandl19} that the conjecture is true if and only if
$\mathbb P_1\circ\pptmap\subset\dec$ if and only if
$\pptmap\ot  \pptmap (\pos)\subset \mathcal S_1$ holds.
In \cite{gks}, several equivalent claims to the PPT square conjecture are found through composition as follows:
\begin{proposition}\label{fhbukkm}\cite{gks}
The following statements are equivalent:
\begin{enumerate}
\item[(i)]
$\pptmap \circ \pptmap \subset \superpos_1$.
\item[(ii)]
$\pptmap \circ \mathbb P_1 \subset \dec$.
\item[(iii)]
$\mathbb P_1 \circ \pptmap \subset \dec$.
\item[(iv)]
$\pptmap \circ \cp \circ \pptmap \subset \superpos_1$.
\item[(v)]
$\pptmap \circ \dec \circ \pptmap \subset \superpos_1$.
\item[(vi)]
$\pptmap \circ \mathbb P_1 \circ \pptmap \subset \cp$.
\item[(vii)]
$\pptmap \circ \mathbb P_1 \circ \pptmap \subset \pptmap$.
\end{enumerate}
\end{proposition}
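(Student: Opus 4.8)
The plan is to derive every equivalence from the transposition identities of (\ref{pair_dual}) together with a handful of standing facts: the self-dualities $\cp^\circ=\cp$, $\superpos_1^\circ=\mathbb P_1$, $\mathbb P_1^\circ=\superpos_1$ and $\pptmap^\circ=\dec$; the $*$-invariances $\cp^*=\cp$ (from $\ad_s^*=\ad_{s^\ttt}$) and $\pptmap^*=\pptmap$ (because $\choi_{\phi^*}$ is the flip of $\choi_\phi$, and the flip preserves both positivity and positivity of the partial transpose); the identities $\pptmap\circ\ttt=\pptmap$ and $\ccp=\cp\circ\ttt=\ttt\circ\cp$, which are immediate from $\choi_{\phi\circ\ttt}=\choi_\phi^\Gamma$ and the fact that the partial transpose is an involution compatible with the full transpose; and finally the mapping-cone inclusion $\cp\circ\pptmap\circ\cp\subset\pptmap$ together with $\id\in\cp$ and the containments in the chain (\ref{chain_ppt_dec}).

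First I would dispose of the elementary implications among (i), (iv) and (v). Inserting $\id\in\cp\subset\dec$ gives (v)$\Rightarrow$(iv)$\Rightarrow$(i). Conversely, since $\pptmap$ is a right mapping cone we have $\pptmap\circ\cp\subset\pptmap$, so (i) yields $\pptmap\circ\cp\circ\pptmap\subset\pptmap\circ\pptmap\subset\superpos_1$, which is (iv). For (iv)$\Rightarrow$(v) I would write $\ccp=\ttt\circ\cp$ and use $\pptmap\circ\ttt=\pptmap$ to get $\pptmap\circ\ccp\circ\pptmap=(\pptmap\circ\ttt)\circ\cp\circ\pptmap=\pptmap\circ\cp\circ\pptmap\subset\superpos_1$; expanding $\dec=\cp\join\ccp$ and using that $\superpos_1$ is a convex cone then reduces (v) to this.

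The remaining equivalences are pure duality. Reading (i) as ``$\langle\psi\circ\phi,\theta\rangle\ge0$ for all $\phi,\psi\in\pptmap$ and $\theta\in\mathbb P_1$'', the identity $\langle\psi\circ\phi,\theta\rangle=\langle\phi,\psi^*\circ\theta\rangle$ turns this into ``$\psi^*\circ\theta\in\pptmap^\circ=\dec$ for all $\psi\in\pptmap$, $\theta\in\mathbb P_1$'', i.e.\ $\pptmap\circ\mathbb P_1\subset\dec$ after applying $\pptmap^*=\pptmap$; this is (ii), and the companion identity $\langle\psi\circ\phi,\theta\rangle=\langle\psi,\theta\circ\phi^*\rangle$ delivers (iii) the same way. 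Applying both identities in succession to a triple composition gives $\langle\phi_2\circ\theta\circ\phi_1,\sigma\rangle=\langle\theta,\phi_2^*\circ\sigma\circ\phi_1^*\rangle$, so (vi)---which asserts this is $\ge0$ for all $\phi_1,\phi_2\in\pptmap$, $\theta\in\mathbb P_1$ and $\sigma\in\cp=\cp^\circ$---says exactly that $\phi_2^*\circ\sigma\circ\phi_1^*\in\mathbb P_1^\circ=\superpos_1$ for all such, i.e., using $\pptmap^*=\pptmap$ and $\cp^*=\cp$, that $\pptmap\circ\cp\circ\pptmap\subset\superpos_1$, which is (iv). Finally (vii)$\Rightarrow$(vi) is trivial from $\pptmap\subset\cp$, while for (vi)$\Rightarrow$(vii) I would compose on the right with $\ttt$: $(\pptmap\circ\mathbb P_1\circ\pptmap)\circ\ttt=\pptmap\circ\mathbb P_1\circ(\pptmap\circ\ttt)=\pptmap\circ\mathbb P_1\circ\pptmap\subset\cp$ by (vi), hence $\pptmap\circ\mathbb P_1\circ\pptmap\subset\cp\circ\ttt=\ccp$ as well, and therefore $\pptmap\circ\mathbb P_1\circ\pptmap\subset\cp\meet\ccp=\pptmap$.

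None of the individual steps is deep; the real content is the bookkeeping. The one place that needs genuine care rather than mechanical duality is the passage between the statements phrased with $\cp$ and those phrased with $\pptmap$, $\dec$ or $\ccp$---namely (iv)$\Leftrightarrow$(v) and (vi)$\Leftrightarrow$(vii)---where one must keep straight that $\pptmap$ is stable under composition with $\ttt$, that $\ccp$ equals both $\cp\circ\ttt$ and $\ttt\circ\cp$, and that $\pptmap^*=\pptmap$; matching the left and right sides and the quantifiers correctly there is the only thing that could easily go wrong. It is also worth noting that all of (i)--(vii) are already recorded (via composition) in \cite{gks}, so an alternative is simply to cite them and check the translation.
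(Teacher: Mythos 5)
Your argument is correct and follows the same route the paper sketches: (i)$\Leftrightarrow$(iv) from the fact that $\pptmap$ is a mapping cone, and the remaining equivalences from the adjoint identities in (\ref{pair_dual}) combined with the dualities $\superpos_1^\circ=\mathbb P_1$, $\pptmap^\circ=\dec$, $\cp^\circ=\cp$ and the $*$- and $\ttt$-stability of the cones involved. The only place you go beyond the paper's sketch is in spelling out the transpose bookkeeping for (iv)$\Leftrightarrow$(v) and (vi)$\Leftrightarrow$(vii), and those steps are carried out correctly.
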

Here, equivalence between (i) and (iv) follows, since $\pptmap$ is a mapping cone. The other claims are
easy consequences of the identities in (\ref{pair_dual}).
For example, the identity
$$
\lan\phi_1\circ\sigma\circ\phi_2,\psi\ran=\lan\sigma,\phi_1^*\circ\psi\circ\phi_2^*\ran
$$
with $\psi\in\mathbb P_1$, $\phi_i\in\pptmap$ and $\sigma\in\cp$ proves (iv) $\Longleftrightarrow$ (vi).
Claims in Proposition \ref{fhbukkm} may be translated into those in terms of tensor products.
By the identity (\ref{fund}), we have the relation
\begin{equation}\label{comp1}
\choi_{K_3\circ K_2\circ K_1^*}=(K_1\ot K_3)(\choi_{K_2})
\end{equation}
for any convex cones $K_1$, $K_2$ and $K_3$, whenever the above
expression is meaningful. Furthermore, we also have
\begin{equation}\label{comp2}
\choi_{K_3\circ K_1^*}=(K_1\ot K_3)(\choi_{\id_A})
\end{equation}
whenever $K_1,K_3\subset H(M_A,M_B)$. Therefore, we see that
$\pptmap\circ\pptmap\subset \superpos_1$ if and only if
$(\pptmap\ot\pptmap)(\choi_{\id_A})\subset \choi_{\superpos_1}$. In this way,
we use (\ref{comp1}) and (\ref{comp2}) to translate the statements
Proposition \ref{fhbukkm} to those in terms of tensor products of
linear maps.

\begin{proposition}\label{ppt1}
The following are equivalent to the PPT square conjecture:
\begin{enumerate}
\item[(i)]
$(\pptmap\ot\pptmap)(\choi_{\id_A})\subset {\mathcal S}_1$,
\item[(ii)]
$(\mathbb P_1\ot \pptmap)(\choi_{\id_A})\subset \decmat$,
\item[(iii)]
$(\pptmap\ot \mathbb P_1)(\choi_{\id_A})\subset \decmat$,
\item[(iv)]
$(\pptmap\ot\pptmap)(\pos)\subset {\mathcal S}_1$,
\item[(v)]
$(\pptmap\ot\pptmap)(\decmat)\subset {\mathcal S}_1$,
\item[(vi)]
$(\pptmap\ot\pptmap)(\blockpos_1)\subset {\mathcal P}$,
\item[(vii)]
$(\pptmap\ot\pptmap)(\blockpos_1)\subset \ppt$.
\end{enumerate}
\end{proposition}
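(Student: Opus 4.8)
The plan is to observe that the seven statements of Proposition~\ref{ppt1} are, term by term, nothing but the Choi-matrix translations of the seven statements of Proposition~\ref{fhbukkm}; once this dictionary is set up, the result follows at once from that proposition. So I would not establish anything genuinely new, but rather carry out the translation carefully, using only the identities~(\ref{comp1}) and~(\ref{comp2}) (both immediate from~(\ref{fund})).

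First I would record three bookkeeping facts. (a) Since $\phi\mapsto\choi_\phi$ is a linear isomorphism, for any subsets $S,T$ of a mapping space one has $S\subset T$ if and only if $\choi_S\subset\choi_T$; thus every inclusion of cones of maps is equivalent to the inclusion of their Choi images. (b) Each of $\cp$, $\mathbb P_1$, $\dec$ and $\pptmap$ is invariant under the dual map $\phi\mapsto\phi^*$, because positivity, complete positivity and complete copositivity are all preserved under $\phi\mapsto\phi^*$ (equivalently, $\choi_{\phi^*}$ is the flip of $\choi_\phi$, and the flip preserves each of the corresponding cones of matrices). (c) From the diagrams~(\ref{diagram}) and~(\ref{chain_ppt_dec}) together with Theorem~\ref{choi} one has $\choi_\cp=\pos$, $\choi_{\mathbb P_1}=\blockpos_1$, $\choi_\dec=\decmat$, $\choi_{\pptmap}=\ppt$ and $\choi_{\superpos_1}={\mathcal S}_1$.

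Next I would feed Proposition~\ref{fhbukkm} through these identities. For items (i), (ii), (iii) I use~(\ref{comp2}) with the pair $(K_1,K_3)$ taken to be $(\pptmap,\pptmap)$, $(\mathbb P_1,\pptmap)$ and $(\pptmap,\mathbb P_1)$, respectively; since $\pptmap^*=\pptmap$ and $\mathbb P_1^*=\mathbb P_1$ by (b), the left-hand sides of (i)--(iii) of Proposition~\ref{ppt1} become exactly $\choi_{\pptmap\circ\pptmap}$, $\choi_{\pptmap\circ\mathbb P_1}$ and $\choi_{\mathbb P_1\circ\pptmap}$, so that, by (a) and (c), these are literally statements (i), (ii), (iii) of Proposition~\ref{fhbukkm}. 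For items (iv), (v), (vi), (vii) I use~(\ref{comp1}) with $(K_1,K_3)=(\pptmap,\pptmap)$ and the middle cone $K_2$ equal to $\cp$, $\dec$, $\mathbb P_1$ and $\mathbb P_1$ respectively; again $\pptmap^*=\pptmap$ and $K_2^*=K_2$ by (b), so the left-hand sides become $\choi_{\pptmap\circ\cp\circ\pptmap}$, $\choi_{\pptmap\circ\dec\circ\pptmap}$, $\choi_{\pptmap\circ\mathbb P_1\circ\pptmap}$ and $\choi_{\pptmap\circ\mathbb P_1\circ\pptmap}$, while the right-hand sides ${\mathcal S}_1$, ${\mathcal S}_1$, ${\mathcal P}$, $\ppt$ are $\choi_{\superpos_1}$, $\choi_{\superpos_1}$, $\choi_\cp$, $\choi_{\pptmap}$ by (c). By (a) these are precisely statements (iv)--(vii) of Proposition~\ref{fhbukkm}, and invoking that proposition finishes the argument.

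I do not expect a real obstacle here: the content is entirely bookkeeping. The only point that needs care is the flip/dual when converting the $K_1^*$ appearing in~(\ref{comp1})--(\ref{comp2}) back into $K_1$, which is exactly why the $*$-invariance of $\cp$, $\mathbb P_1$, $\dec$ and $\pptmap$ is isolated as fact (b). I would also remark that~(\ref{comp1}) and~(\ref{comp2}) are honest equalities of sets, not merely inclusions, since~(\ref{fund}) is an identity; consequently no convexity or mapping-cone hypothesis is needed for the translation itself, those having already been absorbed into the proof of Proposition~\ref{fhbukkm}.
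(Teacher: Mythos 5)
Your proposal is correct and is essentially the paper's own argument: the paper likewise obtains Proposition~\ref{ppt1} by using the identities~(\ref{comp1}) and~(\ref{comp2}) to translate each statement of Proposition~\ref{fhbukkm} into Choi-matrix form. You simply spell out the bookkeeping (the $*$-invariance of $\cp$, $\mathbb P_1$, $\dec$, $\pptmap$ and the identifications $\choi_\cp=\pos$, etc.) more explicitly than the paper does, and correctly.
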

The equivalent statement (iv) observed in \cite{Christandl19} claims that if $\phi$ and $\psi$ are PPT maps
then $\phi\ot\psi$ is an entanglement annihilating map. This is true for maps between $3\times 3$ matrices,
since the PPT square conjecture is true in this case \cite{{chen_yany_tang},{Christandl19}}.

Suppose that $\phi_i\in H(M_{A_i},M_{B_i})$ for $i=1,2$, $\phi_3\in H(M_{A_1},M_{A_2})$ and
$\phi_4\in H(M_{B_1},M_{B_2})$. Then we have the identity
$$
\begin{aligned}
\lan(\phi_1\ot \phi_2)(\choi_{\phi_3}),\choi_{\phi_4}\ran
&=\lan\phi_2\circ\phi_3\circ\phi_1^*,\phi_4\ran\\
&=\lan\phi_4\circ\phi_1\circ\phi_3^*,\phi_2\ran\\
&=\lan (\phi_3\ot\phi_4)(\choi_{\phi_1}),\choi_{\phi_2}\ran.
\end{aligned}
$$
Hence, we see that the following equivalence
$$
(K_1\ot K_2)(\choi_{K_3})\subset \choi_{K_4}
\ \Longleftrightarrow\
(K_3\ot K_4^\circ)(\choi_{K_1})\subset \choi_{K_2}^\circ
$$
holds, since the identity holds for $\phi_i\in K_i$ for $i=1,2,3$ and $\phi_4\in K_4^\circ$.
Therefore, we have the following equivalent statements to the PPT square conjecture.

\begin{proposition}\label{ppt2}
The following are equivalent to the PPT square conjecture:
\begin{enumerate}
\item[(i)]
$(\id_A\ot{\mathbb P}_1)(\ppt)\subset \decmat$,
\item[(ii)]
$(\id_A\ot \pptmap)(\blockpos_1)\subset \decmat$,
\item[(iii)]
$(\id_A\ot \pptmap)(\ppt)\subset {\mathcal S}_1$,
\item[(iv)]
$(\cp\ot{\mathbb P}_1)(\ppt)\subset \decmat$,
\item[(v)]
$(\dec\ot{\mathbb P}_1)(\ppt)\subset \decmat$,
\item[(vi)]
$(\mathbb P_1\ot\cp)(\ppt)\subset \decmat$,
\item[(vii)]
$(\mathbb P_1\ot\dec)(\ppt)\subset \decmat$.
\end{enumerate}
\end{proposition}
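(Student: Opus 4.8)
The plan is to deduce the seven claims one at a time from the seven claims of Proposition \ref{ppt1}, by applying the equivalence
\[
(K_1\ot K_2)(\choi_{K_3})\subset \choi_{K_4}
\ \Longleftrightarrow\
(K_3\ot K_4^\circ)(\choi_{K_1})\subset (\choi_{K_2})^\circ ,
\]
established just above the statement of Proposition \ref{ppt2}; no new analytic input beyond this and Proposition \ref{ppt1} is needed. Before carrying this out I would record two routine facts. First, by the definition (\ref{bi-map-map}) of the pairing on mapping spaces, the correspondence $\phi\mapsto\choi_\phi$ is an isometry onto $(M_A\ot M_B)^{\rm h}$, hence $\choi_{K^\circ}=(\choi_K)^\circ$ and $(\choi_K)^{\circ\circ}=\choi_K$ for every closed convex cone $K$ of maps. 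Second, from the chain (\ref{chain_ppt_dec}) and the self-duality of $\cp$ one has $\superpos_1^\circ=\mathbb P_1$, $\mathbb P_1^\circ=\superpos_1$, $\cp^\circ=\cp$, $\pptmap^\circ=\dec$, $\dec^\circ=\pptmap$; combining the two, the identifications $\choi_\cp=\pos$, $\choi_{\mathbb P_1}=\blockpos_1$, $\choi_{\superpos_1}={\mathcal S}_1$, $\choi_{\pptmap}=\ppt$, $\choi_{\dec}=\decmat$ and those of the corresponding dual cones may be used without further comment.

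Next, for each $i$ I would read Proposition \ref{ppt1}(i) as an instance of the left-hand side of the displayed equivalence: $K_1$ and $K_2$ are the two tensor factors; $\choi_{K_3}$ is the set being ampliated, so $K_3=\{\id_A\}$ when it is $\choi_{\id_A}$ and $K_3=\cp,\dec,\mathbb P_1$ when it is $\pos,\decmat,\blockpos_1$; and $\choi_{K_4}$ is the codomain, so $K_4=\superpos_1,\cp,\pptmap$ when it is ${\mathcal S}_1,\pos,\ppt$. Applying the equivalence and substituting the dual cones recorded above then yields exactly Proposition \ref{ppt2}(i). For example, Proposition \ref{ppt1}(i) (with $K_1=K_2=\pptmap$, $K_3=\{\id_A\}$, $K_4=\superpos_1$) gives $(\id_A\ot\mathbb P_1)(\ppt)\subset\decmat$; Proposition \ref{ppt1}(ii) (with $K_1=\mathbb P_1$) gives $(\id_A\ot\pptmap)(\blockpos_1)\subset\decmat$; Proposition \ref{ppt1}(iii) (with $K_2=\mathbb P_1$, $K_4=\dec$, so the codomain becomes $(\choi_{\mathbb P_1})^\circ={\mathcal S}_1$ and the new factor $K_4^\circ=\pptmap$) gives $(\id_A\ot\pptmap)(\ppt)\subset{\mathcal S}_1$; and Proposition \ref{ppt1}(iv)--(vii) give Proposition \ref{ppt2}(iv)--(vii) upon replacing the domain cones $\pos,\decmat,\blockpos_1$ by $\cp,\dec,\mathbb P_1$ and the codomain cones $\superpos_1,\cp,\pptmap$ by their duals. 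Since the displayed equivalence is a genuine ``if and only if'' and each item of Proposition \ref{ppt1} is equivalent to the PPT square conjecture, the same holds for each item of Proposition \ref{ppt2}.

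I expect the work to be entirely organizational, the nearest thing to an obstacle being to keep the left/right tensor order straight --- the flip interchanges the roles of the ``ampliating'' cone and the ``ambient'' cone, so $K_1\ot K_2$ on the outside becomes $K_3\ot K_4^\circ$ on the inside --- and to remember that the step $(\choi_{K_4})^{\circ\circ}=\choi_{K_4}$ needs $K_4$ closed, which causes no trouble here since $\superpos_1$, $\cp$, $\pptmap$ are closed mapping cones. Tracking which of the symbols $\pos,\decmat,\blockpos_1,{\mathcal S}_1,\ppt$ is the Choi image of which mapping cone is the only other bookkeeping point, and no genuinely new argument is required.
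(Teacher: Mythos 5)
Your proposal is correct and is essentially the paper's own argument: the paper likewise derives each item of Proposition \ref{ppt2} from the corresponding item of Proposition \ref{ppt1} by the flip equivalence $(K_1\ot K_2)(\choi_{K_3})\subset \choi_{K_4}\Leftrightarrow (K_3\ot K_4^\circ)(\choi_{K_1})\subset \choi_{K_2}^\circ$ established just before the statement, together with the standard identifications $\choi_{K^\circ}=(\choi_K)^\circ$ and the dual pairs $\superpos_1^\circ=\mathbb P_1$, $\cp^\circ=\cp$, $\pptmap^\circ=\dec$. Your item-by-item bookkeeping of $K_1,\dots,K_4$ checks out, so nothing further is needed.
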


We note that the maps in Proposition \ref{ppt2} are linear maps from $M_A\ot M_B$ into itself, while
maps in Proposition \ref{ppt1} were maps from $M_A\ot M_A$ into $M_B\ot M_B$.
Equivalent claims of the conjecture through ampliation as in (i), (ii) and (iii) of Proposition \ref{ppt2}
can be also obtained from Theorem \ref{the_dual_amp}.
In fact, $K_1\circ K_2\subset K_3$ holds if and only if
${\mathbb F}_a(K_1,K_2)\subset K_3$ if and only if
$K_3^\circ\subset\amp_a[K_1,K_2^\circ]$ by Theorem \ref{the_dual_amp}.
Therefore, we have
$$
K_1\circ K_2\subset K_3
\ \Longleftrightarrow\
(\id_A\ot K_3^\circ)(\choi_{K_1})\subset \choi_{K_2^\circ}.
$$
For an example, we have $\pptmap\circ\pptmap\subset {\mathcal S}_1$ if and only if
$(\id\ot \mathbb P_1)(\ppt)\subset \decmat$.
In this way, we see that the first three claims in Proposition \ref{fhbukkm}
are equivalent to those in Proposition \ref{ppt2}.

As for linear maps between $3\times 3$ matrices, it was shown in \cite{aubrun_mul} recently
that the following relation
$$
\pptmap\ot  \pptmap (\blockpos_1)\subset \mathcal S_1
$$
holds. Comparing with Proposition \ref{ppt1} (iv), (v) (vi) and (vii), this is seemingly stronger than the PPT square conjecture.
We also note that this is equivalent to the claims
$$
\pptmap\circ \mathbb P_1 \circ \pptmap \subset \superpos_1,\quad {\text{\rm and}}\quad
(\mathbb P_1\ot\mathbb P_1)(\ppt)\subset\decmat,
$$
respectively.
It would be interesting to know if these are equivalent to the PPT square conjecture in general.

\section{Discussion}\label{discussion}

In this article, we began with the concrete examples of positive maps of the form $x\mapsto s^*xs$
to introduce the convex cones in (\ref{diagram}) which appear as key notions in the current quantum information theory.
Main tools were Choi matrices and duality arising from the bilinear pairing.
Such convex cones are mapping cones whose dual cones can be described in terms composition
and tensor products of linear maps.

Because $\superpos_k$ is singly generated as a mapping cone,
elements of the dual cone can be described by the tensor product with a single map $\id_k$.
Especially, we have $\phi\in\superpos_k^\circ$ if and only if $\id_k\ot\phi\in \superpos_1^\circ$,
and see that the ampliation connects $\superpos_k^\circ$ and $\superpos_1^\circ$.
A natural question arises: Does there exist an operation which connects $\superpos_k$ and $\superpos_1$?
Such an operation would be very useful as criteria for $k$-superpositivity of maps and Schmidt numbers
of bi-partite states, because various known separability criteria may be used for this purpose.

The identity (\ref{fund}) played essential roles through the whole discussion. It is also natural
to look for analogous identities for arbitrary iterations of tensor products and compositions.
Suppose that $\phi_i: M_{A_i}\to M_{B_i}$ is a linear map for $i=1,2,\dots,n$. Then we have the linear map
$$
\phi_1\ot\cdots\ot\phi_n: M_{A_1}\ot\cdots \ot M_{A_n}\to M_{B_1}\ot\cdots\ot M_{B_n}.
$$
We can take a nontrivial bi-partition $S\sqcup T=\{1,2,\dots,n\}$ of systems, then matrices in the domain can be
dealt as the Choi matrix of a linear map from $\bigotimes_{i\in S} M_{A_i}$ to $\bigotimes_{i\in T} M_{A_i}$.
See \cite{han_kye_optimal} for more details.
Then we may apply (\ref{fund}) to get the identity.
Alternatively, one may use the Choi matrices of multi-linear maps from $M_{A_1}\times\cdots \times M_{A_{n-1}}$
to $M_{A_n}$ as it was considered in \cite{kye_multi_dual}.
Then we use the similar calculation as in the proof of (\ref{fund}) to get an identity.
In both methods, we cannot obtain identities involving arbitrary iterations of a map by composition.
Such an identity may be useful to deal with tensor-stable positive maps.

\end{document}